\documentclass[letterpaper,11pt,runningheads]{llncs}
\usepackage[T1,T5]{fontenc}
\usepackage{amssymb,theorem}
\usepackage{amsmath,array,amscd,pstricks}
\usepackage[mathscr]{eucal}
\usepackage{fullpage}
\setlength{\headsep}{1cm}
\DeclareMathOperator{\End}{End}
\DeclareMathOperator{\Frob}{Frob}

\newcommand{\Z}{\ensuremath{\mathbb{Z}}}

\newcommand{\K}{\ensuremath{\mathcal{K}}}
\newcommand{\brho}{\mathfrak{f}}
\newcommand{\f}{\ensuremath{\mathbb{F}}}

\newcommand{\ot}{\leftarrow}
\newcommand{\ind}{\hspace*{0.5cm}}
\pagestyle{plain}
\sloppy
 {\theoremstyle{break}
 \theorembodyfont{\ttfamily}
 \newtheorem{algo}{Algorithm}}


\newif\ifblind


\title{Four-Dimensional Gallant-Lambert-Vanstone Scalar Multiplication}
\titlerunning{Four-Dimensional GLV Method}
\pagestyle{headings}

\author{Peter Birkner\inst{1} \and Patrick Longa\inst{2} \and Francesco Sica\inst{3}
}

\institute{%
Bundesamt f\"ur Sicherheit in der Informationstechnik (BSI),\\
Godesberger Allee 185-189,
53175 Bonn,
Germany\\
\email{peter.birkner@bsi.bund.de}
\and
Microsoft Research,\\
One Microsoft Way, Redmond, WA 98052, USA\\
\email{plonga@microsoft.com}
\and
    Via Toscana 50,
    Prata (GR),
    58024 Italy\\
    \email{fracrypto@gmail.com}
 }

\authorrunning{P. Birkner \and P. Longa \and F. Sica}

\ifblind
\author{Submission to Eurocrypt 2012}
\institute{}\pagestyle{plain}
\authorrunning{}
\fi


\begin{document}
\maketitle \thispagestyle{plain}
\begin{abstract}

The GLV method of Gallant, Lambert and
Vanstone~(CRYPTO 2001) computes any multiple $kP$ of a point $P$ of prime order $n$ lying on an elliptic curve with a low-degree endomorphism $\Phi$ (called GLV curve) over $\mathbb{F}_p$ as
\[
kP = k_1P + k_2\Phi(P), \quad\text{with } \max\{|k_1|,|k_2|\}\leq C_1\sqrt n
\]
for some explicit constant $C_1>0$. Recently, Galbraith, Lin and Scott (EUROCRYPT 2009) extended this method to all curves over $\mathbb{F}_{p^2}$ which are twists of curves defined over $\mathbb{F}_p$. 
We show in this work how to merge the two approaches in order to get, for twists of any GLV curve over $\mathbb{F}_{p^2}$, a four-dimensional decomposition together with fast endomorphisms $\Phi, \Psi$ over $\mathbb{F}_{p^2}$ acting on the group generated by a point $P$ of prime order $n$, resulting in a proved decomposition for any scalar $k\in[1,n]$
$$
kP=k_1P+ k_2\Phi(P)+ k_3\Psi(P) + k_4\Psi\Phi(P)\quad \text{with } \max_i (|k_i|)< C_2\, n^{1/4}
$$
for some explicit $C_2>0$. Furthermore, taking the best $C_1, C_2$, we get $C_2/C_1<408$, independently of the curve, ensuring a constant relative speedup.

We also derive new families of GLV curves, corresponding to those curves with degree 3 endomorphisms.

\vspace{\baselineskip}
\textbf{Keywords.} Elliptic curves, GLV method, Scalar Multiplication.
\end{abstract}

\section{Introduction}

The Gallant-Lambert-Vanstone (GLV) method is a generic method to speed up computation on some elliptic curves over fields of large characteristic. Given a curve with a point $P$ of prime order $n$, it consists essentially in an algorithm to find a decomposition of an arbitrary scalar multiplication $kP$ for $k\in[1,n]$ into two scalar multiplications with the new scalars having only about half the original bits. We call such a method two-dimensional, since if scalar multiplications can be parallelized, then a twofold performance speedup can be achieved.

Whereas the original GLV method as defined in~\cite{GLV01} works on curves over $\mathbb{F}_p$ with an endomorphism of small degree (GLV curves), Galbraith-Lin-Scott (GLS) in~\cite{GLS09} have shown that over $\mathbb{F}_{p^2}$ one can expect to find many more such curves by basically exploiting the action of the Frobenius endomorphism. One can therefore expect that on the particular GLV curves, this new insight will lead to improvements over $\mathbb{F}_{p^2}$. Indeed the GLS article itself considers fourfold speedups on GLV curves with nontrivial automorphisms (corresponding to the degree one cases) but leaves the other cases open to investigation.

Recently a paper by Zhou, Hu, Xu and Song~\cite{ZHXS10} has shown that it is possible to combine the two approaches by introducing a three-dimensional version of the GLV method (thus getting three scalars with a threefold speedup), which seems to be working to a certain degree, with however no justification but through practical implementations.

In contrast, we would like to show that the most natural understanding of their ideas is in four dimensions, where we are then able to construct, for the same curves and fast endomorphisms $\Phi, \Psi$ over $\mathbb{F}_{p^2}$ acting on a cyclic group generated by a point $P$ of prime order $n$, a proved decomposition for any scalar $k\in[1,n]$
$$
kP=k_1P+ k_2\Phi(P)+ k_3\Psi(P) + k_4\Psi\Phi(P)\quad \text{with } \max_i (|k_i|)< C n^{1/4}
$$
for some explicitly computable $C$. If parallel computation is available, then the computation of $kP$ can possibly be implemented up to four times as fast as a traditional scalar multiplication. It recently came to our attention that Hu, Longa and Xu~\cite{HLX11} also provided a similar bound in the case of curves with $j$-invariant 0. Our analysis supplements theirs by considering all GLV curves, where we provide an unified treatment.

The way to prove this bound is to study the kernel lattice of the GLV reduction map in dimension four. The LLL algorithm~\cite{LLL82} then will find a suitable reduced basis together with a useful bound to deduce $C$. In the last part of the article, we develop another approach which gives a reduced basis faster than the LLL algorithm together with a much better value for $C$. 
Indeed our reduction algorithm runs in $O(\log^2n)$ compared to $O(\log^3n)$ for LLL and the improved $C=O(\sqrt s)$ compared to the value obtained with LLL which is only $\Omega(s^{3/2})$. This allows us to prove that the relative speedup in going from a two-dimensional to a four-dimensional GLV method is independent of the curve.

\section{The GLV Method}
\label{S:2}

In this section we briefly summarize the GLV method following~\cite{SCQ02}. Let $E$ be an
elliptic curve defined over a finite field $\mathbb{F}_q$ and $P$ be a
point of this curve with prime order $n$ such that the cofactor
$h=\#E(\mathbb{F}_q)/n$ is small, say $h\leq4$. Let us consider $\Phi$
a non trivial endomorphism defined over $\mathbb{F}_q$ and $X^2+rX+s$
its characteristic polynomial. In all the examples $r$ and $s$ are
actually small fixed integers and $q$ is varying in some family. By
hypothesis there is only one subgroup of order $n$ in $E(\mathbb{F}_q)$, implying that $\Phi(P)= \lambda P$ for some
$\lambda\in [0,n-1]$, since $\Phi(P)$ has order dividing the prime $n$. 
In particular, $\lambda$ is obtained as a root of $X^2+rX+s$
modulo $n$.

Define the group homomorphism (the GLV reduction map)
\[ \begin{array}{rcl}
 \brho\colon\Z\times \Z &\to &\Z/n \\
 (i,j) & \mapsto & i+ \lambda j \pmod n\enspace.
 \end{array}
 \]

Let $\K=\ker\brho$. It is a sublattice of $\Z\times\Z$ of rank 2 since the quotient is finite. Let $\Bbbk>0$ be a constant (depending on the curve) such that we can find 
$v_1, v_2$ two linearly independent vectors of \K\ satisfying
$\max\{\left| v_1\right|, \left|
v_2\right|\}<~\Bbbk \sqrt n$, where
$\left|\vphantom{v_1}\cdot\right|$ denotes the 
rectangle norm\footnote{The rectangle norm of $(x,y)$
is by definition $\max(|x|,|y|)$. As remarked in~\cite{SCQ02}, we can replace it by any other metric norm. We will use the term ``short" to denote smallness in the rectangle norm.}. Express
 \[(k,0)= \beta_1v_1 + \beta_2v_2\enspace, \]
where $\beta_i\in\mathbb{Q}$. Then round $\beta_i$ to the nearest
integer $b_i=\lfloor\beta_i\rceil= \lfloor\beta_i+1/2\rfloor$ and let
$v= b_1v_1+ b_2v_2$. Note that $v\in\K$ and that
$u\overset{\text{def}}{=} (k,0)-v$ is short. Indeed by the triangle
inequality we have that
\[\left| \vphantom{v_1}u\right| \leq \frac{\left| v_1\right|+ \left| v_2 \right|}2
<\Bbbk \sqrt n\enspace.
\]
If we set $(k_1,k_2)=u$, then we get $k \equiv k_1+k_2\lambda \pmod n$ or
equivalently $kP = k_1P + k_2\Phi(P)$, with $\max
(|k_1|,|k_2|)<\Bbbk \sqrt n$. 

In~\cite{SCQ02}, the optimal value of $\Bbbk$ (with respect to large values of $n$, i.e. large fields, keeping $X^2+rX+s$ constant) is determined. Let $\Delta=r^2-4s$ be the discriminant of the characteristic polynomial of $\Phi$. Then the optimal $\Bbbk$ is given by the following result\footnote{There is a mistake in~\cite{SCQ02} in the derivation of $\Bbbk$ for odd values of $r$. This affects~\cite[Corollary~1]{SCQ02} for curves $E_2$ and $E_3$, where the correct values of $\Bbbk$ are respectively $2/3$ and $4\sqrt 2/7$ .} .

\begin{theorem}[\protect{\cite[Theorem~4]{SCQ02}}] 
\label{T:min}
Assuming $n$ is the norm of { an element} of $\Z[\Phi]$, then the optimal value of $\Bbbk$ is

$$
\Bbbk = \begin{cases}
 \dfrac{\sqrt s}2 \Bigl(1+\dfrac1{|\Delta|}\Bigr), &\text{if $r$ is odd,}\\
\rule{0pt}{2em}
\dfrac{\sqrt s}2 \sqrt{1+\dfrac4{|\Delta|}}, &\text{if $r$ is even.}
\end{cases}
$$

\end{theorem}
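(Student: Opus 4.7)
The plan is to produce an explicit basis of $\K$ from the hypothesis $n=N(\pi)$ with $\pi=a+b\Phi\in\Z[\Phi]$, perform a Gauss-type reduction in the rectangle norm, and then verify that the resulting constant is sharp.

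First I construct the basis. Since $\lambda$ is a root of $X^2+rX+s$ modulo $n$, the assignment $\Phi\mapsto\lambda$ extends to a ring homomorphism $\Z[\Phi]\to\Z/n$ that kills $\pi$; in particular $a+b\lambda\equiv 0\pmod n$, so $v_1:=(a,b)\in\K$. Multiplying by $\Phi$ and using $\Phi^2=-r\Phi-s$ gives $\Phi\pi=-sb+(a-rb)\Phi$, whence $v_2:=(-sb,a-rb)\in\K$. The determinant $a(a-rb)+sb^2=a^2-rab+sb^2=N(\pi)=n$ equals the covolume of $\K$, so $\{v_1,v_2\}$ is a $\Z$-basis of $\K$ (and under the identification $a+b\Phi\leftrightarrow(a,b)$ it realises the principal ideal $(\pi)\subset\Z[\Phi]$).

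Next I would perform the reduction. Any other basis, whether arising from another generator of $(\pi)$ or from a transvection $v_2\mapsto v_2-mv_1$ with $m\in\Z$, must still satisfy the covolume identity $a^2-rab+sb^2=n$. The rectangle norms are $|v_1|_R=\max(|a|,|b|)$ and $|v_2|_R=\max(s|b|,|a-rb|)$, and I would minimise $\max(|v_1|_R,|v_2|_R)$ subject to that constraint. The split on the parity of $r$ is dictated by the shape of the fundamental domain of the form $a^2-rab+sb^2$: when $r$ is even, completion of the square is integral and the reduced form is symmetric, producing the bound $(\sqrt s/2)\sqrt{1+4/|\Delta|}$; when $r$ is odd, only half-integer completion is available, and the extra asymmetry leaks into the rectangle-norm estimate as a linear term, giving $(\sqrt s/2)(1+1/|\Delta|)$.

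The main obstacle is that the rectangle norm, unlike the intrinsic quadratic form $a^2-rab+sb^2$, does not transform nicely under multiplication in $\Z[\Phi]$, so the Gauss reduction cannot be read off from lattice invariants and must be carried out in coordinates with careful casework on the signs of $a,b,a-rb$. Moreover, to prove \emph{optimality} (not merely an upper bound), one must exhibit infinitely many $n$ in the norm form together with $\pi$ whose coordinates approach the extremal configuration of the reduction; this density step follows from standard equidistribution of points on the conic $a^2-rab+sb^2=n$, but combining it with the explicit coordinate Gauss step to match both stated constants is where the bulk of the technical weight will sit.
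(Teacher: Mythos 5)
First, note that the paper does not prove this statement at all: Theorem~\ref{T:min} is quoted from \cite[Theorem~4]{SCQ02} (with a footnote even correcting the constant given there for odd $r$), so there is no in-paper argument to compare yours with; your proposal has to stand on its own, and as it stands it is a plan rather than a proof. Your setup is sound: with $\pi=a+b\Phi$ of norm $n$, the kernel $\K$ of $\brho$ is the ideal $(\pi)$ (index $n$ on both sides), and $\{(a,b),(-sb,a-rb)\}=\{\pi,\Phi\pi\}$ is a $\Z$-basis of covolume $a^2-rab+sb^2=n$ --- modulo the small fix that the reduction $\Phi\mapsto\lambda$ kills either $\pi$ or its conjugate, so you must replace $\pi$ by $\bar\pi$ if necessary. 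But everything that constitutes the actual content of the theorem is deferred. The statement is a minimax assertion: (i) for \emph{every} admissible $n$ one must produce two linearly independent vectors of $\K$ of rectangle norm at most (asymptotically) $\Bbbk\sqrt n$, where the worst case is taken over the possible shapes of the ideal lattice $(\pi)$; and (ii) for sharpness one must exhibit a sequence of $n$'s for which \emph{no} pair of independent vectors of $\K$ does better, i.e.\ a lower bound on the second successive minimum ranging over all lattice vectors, not merely over transvections $v_2-mv_1$ of your particular basis. Neither constant is derived anywhere in the proposal: ``completion of the square is integral'' versus ``the asymmetry leaks into the estimate as a linear term'' does not produce $\frac{\sqrt s}2\sqrt{1+4/|\Delta|}$ or $\frac{\sqrt s}2\bigl(1+\frac1{|\Delta|}\bigr)$, and the odd-$r$ case is precisely where the published derivation in \cite{SCQ02} went wrong, so this is not a step that can be waved through.

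Two further points would need repair even as a plan. Your reduction step ``minimise $\max(|v_1|_R,|v_2|_R)$ subject to the covolume identity'' conflates two different optimizations: letting $(a,b)$ range over the conic $a^2-rab+sb^2=n$ varies the \emph{lattice} (or its generator), whereas for fixed $n$ the relevant problem is to minimise over vectors of the \emph{fixed} lattice $(\pi)$ and then take the worst case over lattice shapes; the optimal $\Bbbk$ is exactly that worst-case value, and your phrasing would instead produce the best-case shape. Second, the sharpness appeal to ``standard equidistribution of points on the conic'' is not the relevant statement: for a fixed $n$ the conic carries only finitely many integral points. What is needed is a family of elements $\pi$ whose lattice shape (essentially the argument of $\pi$ in the complex embedding) approaches the extremal direction --- which is easy to arrange since $n=N(a+b\Phi)$ is automatically a norm for any chosen $(a,b)$ --- combined with a proof that for such lattices every pair of independent vectors has rectangle norm at least $(\Bbbk-\epsilon)\sqrt n$. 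That last lower bound over all lattice vectors, together with the explicit coordinate computation giving the two parity-dependent constants, is the bulk of the theorem and is missing.
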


\section{The GLS Improvement}
\label{S:GLS}

In 2009, Galbraith, Lin and Scott~\cite{GLS09} realised that we don't need to have $\Phi^2+r\Phi+s=0$ in $\End(E)$ but only in a subgroup of $E(\mathbb{F})$ for a specific finite field $\mathbb{F}$. In particular, considering $\Psi=\Frob_p$ the $p$-Frobenius endomorphism of a curve $E$ defined over $\mathbb{F}_p$, we know that $\Psi^m(P)=P$ for all $P\in E(\mathbb{F}_{p^m})$. While this says nothing useful if $m=1,2$, it does offer new nontrivial relations for higher degree extensions. The case $m=4$ is particularly useful here.

In this case if $P\in E(\mathbb{F}_{p^4}) \backslash E(\mathbb{F}_{p^2})$, then 
$
\Psi^2(P) = -P
$ and hence on the subgroup generated by $P$, $\Psi$ satisfies the equation $X^2+1=0$. This implies that if $\Psi(P)$ is a multiple of $P$ (which happens as soon as the order $n$ of $P$ is sufficiently large, say at least $2p$), we can apply the previous GLV construction and split again a scalar multiplication as $kP = k_1P + k_2\Psi(P)$, with $\max (|k_1|,|k_2|) = O(\sqrt n)$.  
Contrast this with the characteristic polynomial of $\Psi$ which is $X^2-a_pX+p$ for some integer $a_p$, a non-constant polynomial to which we cannot apply as efficiently the GLV paradigm.

For efficiency reasons however one does not work with $E/\mathbb{F}_{p^4}$ directly but with $E'/\mathbb{F}_{p^2}$ isomorphic to $E$ \emph{over} $\mathbb{F}_{p^4}$ but not over $\mathbb{F}_{p^2}$, that is, a quadratic twist over $\mathbb{F}_{p^2}$. In this case, it's possible that $\#E'(\mathbb{F}_{p^2})=n\geq (p-1)^2$ be prime. Furthermore, if $\psi \colon E'\to E$ is an isomorphism defined over $\mathbb{F}_{p^4}$, then the endomorphism $\Psi = \psi \Frob_p \psi^{-1} \in \End(E')$ satisfies the equation $X^2+1=0$ and if $p\equiv 5 \pmod 8$ it can be defined over $\mathbb{F}_p$.

This idea is at the heart of the GLS approach, but it only works for curves over $\mathbb{F}_{p^m}$ with $m>1$, therefore it does not generalise the original GLV method but rather complements it.

\section{Examples}

We give a few examples of GLV curves, which are curves defined over $\mathbb{C}$ with complex multiplication by an quadratic integer of small norm, corresponding to an endomorphism $\phi$ of small degree\footnote{By small we mean really small, usually less than 5. In particular, for cryptographic applications, the degree is much smaller than the field size.}. 
  They make up an exhaustive list, up to isomorphism, in increasing order of endomorphism degree up to degree 3. While the first four examples appear in the previous literature, the next ones (degree 3) are new and have been computed with the Stark algorithm~\cite{S73}.

\begin{example}
Let $p\equiv 1\pmod 4$ be a prime. Define an elliptic curve $E_1$ over $\f_p$  by
\[ y^2=x^3+ax\enspace. \]
If $\beta$ is an element of order 4, then the map $\phi$ defined in
the affine plane by
\[\phi(x,y)=(-x, \beta y)\enspace,\]
is an endomorphism of $E_1$ defined over $\f_p$ with
$\End(E_1)=\Z[\phi] \cong \Z[\sqrt{-1}]$, since $\phi$ satisfies the equation
\[\phi^2+1=0\enspace.\]
\end{example}
\begin{example}
Let $p\equiv 1\pmod 3$ be a prime. Define an elliptic curve $E_2$ over $\f_p$  by
\[ y^2=x^3+b\enspace. \]
If $\gamma$ is an element of order 3, then we have an endomorphism $\phi$ defined over $\f_p$  by
\[\phi(x,y)=(\gamma x, y)\enspace,\]
and 
$\End(E_2)=\Z[\phi] \cong\Z[\frac{1+\sqrt{-3}}{2}]$, since $\phi$ satisfies the
equation
\[\phi^2+\phi+1=0\enspace.\]
\end{example}
\begin{example}
Let $p> 3$ be a prime such that -7 is a quadratic residue modulo $p$.
Define an elliptic curve $E_3$ over $\f_p$  by
\[ y^2=x^3- \frac34 x^2 -2x -1\enspace. \]
If $\xi=(1+\sqrt{-7})/2$ and $a=(\xi-3)/4$, then we get the $\f_p$-endomorphism $\phi$
defined by
\[\phi(x,y)=\left(\frac{x^2-\xi}{\xi^2(x-a)}, \frac{y(x^2-2ax+\xi)}{\xi^3(x-a)^2}\right)\enspace,\]
and
$\End(E_3)=\Z[\phi]\cong \Z[\frac{1+\sqrt{-7}}{2}]$, since $\phi$ satisfies the
equation
\[\phi^2-\phi+2=0\enspace.\]
\end{example}
\begin{example}
Let $p> 3$ be a prime such that -2 is a quadratic residue modulo $p$.
Define an elliptic curve $E_4$  over $\f_p$  by
\[ y^2=4x^3- 30x -28 \]
together with the $\f_p$-endomorphism $\phi$ defined\footnote{We take the opportunity to correct a typo found and transmitted in many sources, where a $y$ factor was absent in the second coordinate. Its sign is irrelevant.} by
\[\phi(x,y)=\left(-\frac{2x^2+4x+9}{4(x+2)}, y\frac{2x^2+8x-1}{4\sqrt{-2}(x+2)^2}\right)\enspace.\]
We have
$\End(E_4)=\Z[\phi]\cong\Z[\sqrt{-2}]$ since $\phi$ satisfies the equation
\[\phi^2+2=0\enspace.\]
\end{example}

\begin{example}
Let $p>3$ be a prime such that $-11$ is a quadratic residue $\mod p$. We define the elliptic curve $E_5$ over $\f_p$
$$
y^2 =  x^3 - \frac{13824}{539} x + \frac{27648}{539}
$$
with $a=(1+\sqrt{-11})/2$ and the endomorphism $\phi$ defined by 
\begin{multline*}
\phi(x,y)=\\
\left( \frac{\left(-\frac{539}{5184} a + \frac{539}{1728}\right) x^{3} + \left(\frac{28}{27} a - \frac{35}{18}\right) x^{2} + \left(-\frac{92}{9} a + \frac{8}{3}\right) x + \frac{1728}{77} a + \frac{192}{77}}{\left(\frac{2695}{5184} a - \frac{539}{864}\right) x^{2} + \left(-\frac{217}{54} a + \frac{49}{18}\right) x + \frac{64}{9} a - \frac{4}{3}} , y\right. \\
\left. 
\frac{\left(\frac{3773}{373248} a - \frac{18865}{995328}\right) x^{3} + \left(-\frac{2695}{20736} a + \frac{539}{3456}\right) x^{2} + \left(\frac{7}{432} a - \frac{91}{144}\right) x + \frac{20}{27} a + \frac{1}{9}}{\left(-\frac{18865}{1492992} a + \frac{116963}{995328}\right) x^{3} + \left(\frac{7007}{20736} a - \frac{539}{432}\right) x^{2} + \left(-\frac{791}{432} a + \frac{581}{144}\right) x + \frac{74}{27} a - \frac{35}{9}} 
\right)
\end{multline*}
such that $\End(E_5)=\mathbb{Z}[\phi] \cong \mathbb{Z}[\frac{1+\sqrt{-11}}2]$. The characteristic polynomial of $\phi$ is
$$
\phi^2 - \phi +3 =0 \enspace.
$$
\end{example}

\begin{example}
Let $p>3$ be a prime such that $-3$ is a quadratic residue $\mod p$. We define the elliptic curve $E_6$
over $\f_p$
$$
y^2 = x^3 - \frac{3375}{121}x + \frac{6750}{121}
$$
with the endomorphism $\phi$ defined by 
\begin{multline*}
\phi(x,y)=
\left( -\frac{1331x^3-10890x^2+81675x-189000}{33(11x-45)^2}, \right. \\
\left. y\, \frac{1331x^3 - 16335x^{2} + 7425x + 43875}{3\sqrt{-3} (11x-45)^3} \right)
\end{multline*}
such that\footnote{This is the first example where the endomorphism ring is not the maximal order of its field of fractions. It can be summarily seen as follows:  $\End(E)\supseteq \mathbb{Z}[\sqrt{-3}]$. If not equal, then it must be the full ring of integers $\mathbb{Z}[\frac{1+\sqrt{-3}}2]$. This would imply that $j=0$, as there is only $h(-3)=1$ isomorphism class of elliptic curves with complex multiplication by $\mathbb{Z}[\frac{1+\sqrt{-3}}2]$, given in Example~2 (see~\cite{S73} for an abridged description of the theory of complex multiplication). This is clearly not the case here. Alternatively, one can see that there would exist a nontrivial automorphism (a primitive cube root of unity) corresponding to $\frac{-1+\sqrt{-3}}2$. A direct computation then shows this is impossible.} $\End(E_6)=\mathbb{Z}[\phi] \cong \mathbb{Z}[\sqrt{-3}]$. The characteristic polynomial of $\phi$ is
$$
\phi^2 +3 =0 \enspace.
$$
\end{example}

\section{Combining GLV and GLS}
\label{S:4GLV}
Let $E/\mathbb{F}_p$ be a GLV curve. As in Section~\ref{S:GLS}, we will denote by $E'/\mathbb{F}_{p^2}$ a quadratic twist $\mathbb{F}_{p^4}$-isomorphic to $E$ via the isomorphism $\psi \colon E'\to E$. We also suppose that $\# E'(\mathbb{F}_{p^2}) = nh$ where $n$ is prime and $h\leq 4$. We then have the two endomorphisms of $E'$, $\Psi = \psi \Frob_p \psi^{-1}$ and $\Phi = \psi \phi \psi^{-1}$, with $\phi$ the GLV endomorphism coming with the definition of a GLV curve. They are both defined over $\f_{p^2}$, since if $\sigma$ is the nontrivial Galois automorphism of $\f_{p^4}/\f_{p^2}$, then $\psi^\sigma = -\psi$, so that $\Psi^\sigma= \psi^\sigma \Frob_p^\sigma \bigl(\psi^{-1}\bigr)^\sigma = (-\psi)\Frob_p(-\psi^{-1}) = \Psi$, meaning that $\Psi\in\End_{\f_{p^2}}(E')$. Similarly for $\Phi$, where we are using the fact that $\phi\in\End_{\f_p}(E)$. Notice that  $\Psi^2+1=0$ and that $\Phi$ has the same characteristic polynomial as $\phi$. Furthermore, since we have a large subgroup $\langle P \rangle \subset E'(\mathbb{F}_{p^2})$ of prime order, $\Phi(P)=\lambda P$ and $\Psi(P)=\mu P$ for some $\lambda,\mu\in [1, n-1]$. We will assume that $\Phi$ and $\Psi$, when viewed as algebraic integers, generate disjoint quadratic extensions of $\mathbb{Q}$. In particular, we are not dealing with Example~1, but this can be treated separately with a quartic twist, as was hinted in~\cite{GLS09}. 

Consider the biquadratic (Galois of degree 4, with Galois group $\mathbb{Z}/2\times\mathbb{Z}/2$) number field $K=\mathbb{Q}(\Phi,\Psi)$. Let $\mathfrak{o}_K$ be its ring of integers. The following analysis is inspired by Sica, Ciet and Quisquater~\cite[Section~8]{SCQ02}. 

We have $\mathbb{Z}[\Phi, \Psi] \subseteq \mathfrak{o}_K$. Since the degrees of $\Phi$ and $\Psi$ are much smaller when compared to $n$, the prime $n$ is unramified in $K$ and the existence of $\lambda$ and $\mu$ above means that $n$ splits in $\mathbb{Q}(\Phi)$ and $\mathbb{Q}(\Psi)$, namely that $n$ splits completely in $K$. There exists therefore a prime ideal $\mathfrak n$ of $\mathfrak{o}_K$ dividing $n\mathfrak{o}_K$, such that its norm is $n$. We can also suppose that $\Phi\equiv \lambda \pmod{\mathfrak n}$ and $\Psi\equiv \mu \pmod{\mathfrak n}$. The four-dimensional GLV (4-GLV) method works as follows.

Consider the 4-GLV reduction map $F$ defined by
\[ \begin{array}{rcl}
 F\colon\Z^4 &\to &\Z/n \\
 (x_1, x_2, x_3,x_4) & \mapsto & x_1+x_2\lambda+x_3\mu+x_4\lambda\mu \pmod n\enspace.
 \end{array}
 \]

If we can find four linearly independent vectors $v_1,\dots,v_4 \in \ker F$, with $\max_i |v_i| \leq C n^{1/4}$ for some constant $C>0$, then
for any $k\in[1,n-1]$ we write
\[ (k,0,0,0) = \sum_{j=1}^4 \beta_j v_j \enspace,\]
with $\beta_j\in\mathbb Q$. As in the GLV method one sets $v= \sum_{j=1}^4 \lfloor \beta_j \rceil v_j$ and
\[u=
(k,0,0,0)-v= (k_1, k_2, k_3, k_4) \enspace. \]

We then get
\begin{equation}
\label{E:1}
kP=k_1P+ k_2\Phi(P)+ k_3\Psi(P) + k_4\Psi\Phi(P)\quad \text{with } \max_i (|k_i|)\leq 2C n^{1/4} \enspace.
\end{equation}

We focus next on the study of $\ker F$ in order to find a reduced basis $v_1, v_2, v_3, v_4$ with an explicit $C$. We can factor the 4-GLV map $F$ as

\begin{align*}
&\Z^4   \xrightarrow{\phantom{\mapsto} f \phantom{\mapsto}} &
&\Z[\Phi,\Psi] \xrightarrow[\text{\raisebox{.5ex}[0cm]{$\mod
\mathfrak{n}\cap\Z[\Phi,\Psi]$}}]{\text{reduction}}
\hspace{.5cm}\Z/n  \\
(x_1,x_2, &x_3,x_4) \longmapsto & &x_1+x_2\Phi+ x_3\Psi+  x_4\Phi\Psi  
\longmapsto   x_1+x_2\lambda+x_3\mu+x_4\lambda\mu \\
& && \phantom{x_1+x_2\Phi+ x_3\Psi+  x_4\Phi\Psi  
\longmapsto   x_1+x_2\lambda+} \pmod n
\enspace.
\end{align*}

Notice that the kernel of the second map (reduction mod $\mathfrak{n}\cap\Z[\Phi,\Psi]$) is exactly $\mathfrak{n}\cap\Z[\Phi,\Psi]$. This can be seen as follows. The reduction map factors as 
$$
\mathbb{Z}[\Phi,\Psi] \longrightarrow \mathfrak{o}_K \longrightarrow \mathfrak{o}_K / \mathfrak{n} \cong \mathbb{Z}/n
$$
where the first arrow is inclusion, the second is reduction mod $\mathfrak{n}$, corresponding to reducing the $x_i$'s mod $\mathfrak{n}\cap\mathbb{Z}= n\mathbb{Z}$ and using $\Phi\equiv\lambda, \Psi \equiv\mu \pmod{\mathfrak{n}}$. But the kernel of this map consists precisely of elements of $\mathbb{Z}[\Phi,\Psi]$ which are in $\mathfrak{n}$, and that is what we want. 

Moreover, since the reduction map is surjective, we obtain an isomorphism $ \Z[\Phi,\Psi]/\mathfrak{n}\cap\Z[\Phi,\Psi] \cong \mathbb{Z}/n$ which says that the index of $\mathfrak{n}\cap\Z[\Phi,\Psi]$ inside
$\Z[\Phi,\Psi]$ is $n$. Since the first map $f$ is an isomorphism, we get that 
$\ker F = f^{-1} (\mathfrak{n}\cap\Z[\Phi,\Psi])$ and that $\ker F$ has index $[\mathbb{Z}^4 \colon \ker F]=n$ inside $\mathbb{Z}^4$.

We can also produce a basis of $\ker F$ by the following observation. Let $\Phi' = \Phi-\lambda$, $\Psi' = \Psi - \mu$, hence $\Phi'\Psi'= \Phi\Psi - \lambda \Psi -\mu\Phi + \lambda\mu$. In matrix form,

$$
\begin{pmatrix}
1\\
\Phi'\\
\Psi'\\
\Phi'\Psi'
\end{pmatrix} =
\begin{pmatrix}
1 & 0 & 0 & 0 \\
-\lambda &1 & 0 & 0\\
-\mu & 0 & 1 & 0 \\
\lambda\mu & -\mu & -\lambda & 1
\end{pmatrix}
\begin{pmatrix}
1\\
\Phi\\
\Psi\\
\Phi\Psi
\end{pmatrix}
$$

Since the determinant of the square matrix is $1$, we deduce that $\mathbb{Z}[\Phi,\Psi] = \mathbb{Z}[\Phi', \Psi']$. But in this new basis, we claim that 
$$
\mathfrak{n}\cap\Z[\Phi',\Psi'] = n\mathbb{Z} + \mathbb{Z}\Phi' + \mathbb{Z}\Psi' + \mathbb{Z}\Phi'\Psi' \enspace.
$$

Indeed, reverse inclusion ($\supseteq$) is easy since $\Phi',\Psi', \Phi'\Psi' \in \mathfrak{n}$ and so is $n$, because $\mathfrak{n}$ divides $n\mathfrak{o}_K$ is equivalent to  $\mathfrak{n} \supseteq n\mathfrak{o}_K$. On the other hand, the index of both sides in $\mathbb{Z}[\Phi',\Psi']$ is $n$, which can only happen, once an inclusion is proved, if the two sides are equal. Using the isomorphism $f$, we see that a basis of $\ker F\subset \mathbb{Z}^4$ is therefore given by 
$$
w_1= (n,0,0,0), w_2= (-\lambda, 1 ,0,0), 
w_3 = (-\mu, 0, 1, 0), 
w_4 = ( \lambda\mu, -\mu, -\lambda, 1) \enspace.
$$

The LLL algorithm~\cite{LLL82} then finds, for a given basis $w_1,\dots, w_4$ of
$\ker F$, a reduced\footnote{The estimates are usually given for the Euclidean norm of the vectors. But it is easy to see that the rectangle norm is upper bounded by the Euclidean norm.} basis $v_1,\dots,v_4$ in polynomial time (in the logarithm of the norm of
the $w_i$'s) such that (cf.~\cite[Theorem 2.6.2 p.85]{Co})
\begin{equation}\label{E2}
 \prod_{i=1}^4 |v_i|
 \leq 8\, [\mathbb{Z}^4 \colon \ker F] = 8n\enspace.
\end{equation}

\begin{lemma}
\label{L:genlownumfield}
Let
\[ \begin{array}{rcl}
 \mathscr{N}\colon\Z^4 &\to &\Z \\
 (x_1,x_2,x_3,x_4) & \mapsto & \displaystyle\sum_{\substack{i_1,i_2,i_3,i_4\geq 0\\ i_1+i_2+i_3+i_4=4}}
 b_{i_1,i_2,i_3,i_4} x_1^{i_1}x_2^{i_2}x_3^{i_3} x_4^{i_4}
 \end{array}
\]
be the norm of an element $x_1+x_2\Phi+x_3\Psi+x_4\Phi\Psi \in \mathbb{Z}[\Phi,\Psi]$,
where the $b_{i_1,i_2,i_3,i_4}$'s lie in $\Z$. Then, for any nonzero
$v \in\ker F$, one has
 \begin{equation}\label{E3}
  |v| \geq
\frac{n^{1/4}}{\Bigl(\displaystyle\sum_{\substack{i_1,i_2,i_3,i_4\\
i_1+i_2+i_3+i_4=4}} |b_{i_1,i_2,i_3,i_4}|\Bigr)^{1/4}} \enspace.
\end{equation}
\end{lemma}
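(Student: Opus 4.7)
The plan is to exploit the identification, established earlier in the section, of $\ker F$ with the ideal $\mathfrak{n}\cap\Z[\Phi,\Psi]$ via the isomorphism $f$, and to sandwich $|\mathscr{N}(v)|$ between a lower bound coming from ideal norms and an elementary upper bound in terms of $|v|$.

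First I would observe that $\mathscr{N}(v)$ coincides by construction with $N_{K/\mathbb{Q}}(\alpha)$, where $\alpha = f(v) = x_1+x_2\Phi+x_3\Psi+x_4\Phi\Psi$. The coefficients $b_{i_1,i_2,i_3,i_4}$ are then symmetric functions of the four Galois conjugates of $\Phi$ and $\Psi$, so they are rational by symmetry and algebraic integers because $\Phi,\Psi\in\mathfrak{o}_K$; hence $b_{i_1,i_2,i_3,i_4}\in\Z$, confirming that $\mathscr{N}$ is indeed an integer-valued homogeneous quartic on $\Z^4$. Since $v\neq 0$ and $f$ is a $\Z$-module isomorphism, $\alpha$ is a nonzero element of $\mathfrak{n}$, so the nonzero principal ideal $\alpha\,\mathfrak{o}_K$ is divisible by $\mathfrak{n}$; taking absolute norms gives $n=\norm(\mathfrak{n}) \mid |N_{K/\mathbb{Q}}(\alpha)| = |\mathscr{N}(v)|$, whence in particular $|\mathscr{N}(v)| \geq n$.

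Next I would establish the matching upper bound. By the triangle inequality applied to the defining sum for $\mathscr{N}$, together with $|x_j|\leq |v|$ for each $j$ (by definition of the rectangle norm) and the homogeneity relation $i_1+i_2+i_3+i_4=4$, each monomial is bounded by $|v|^4$, yielding
\[
|\mathscr{N}(v)| \;\leq\; \Bigl(\sum_{i_1+i_2+i_3+i_4=4} |b_{i_1,i_2,i_3,i_4}|\Bigr)\,|v|^4 .
\]

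Combining the two bounds gives $|v|^4 \geq n / \sum |b_{i_1,i_2,i_3,i_4}|$, which is inequality~\eqref{E3} upon taking fourth roots. The only nonroutine ingredient is the first step — the identification of nonzero lattice vectors with nonzero elements of the prime ideal $\mathfrak{n}$, whose norms must be divisible by $n$ — but that work has already been done immediately before the lemma, so the proof reduces to the two estimates above. The integrality of the $b_{i_1,i_2,i_3,i_4}$ is essential, since it is what allows us to conclude $|\mathscr{N}(v)|\geq n$ rather than just $|\mathscr{N}(v)|>0$.
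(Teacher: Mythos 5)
Your proposal is correct and follows essentially the same route as the paper: both rest on showing $|\mathscr{N}(v)|\geq n$ for nonzero $v\in\ker F$ (the paper states $\mathscr{N}(v)\equiv 0\pmod n$ directly, you justify it via $f(v)\in\mathfrak{n}$ and divisibility of ideal norms) and then bound $|\mathscr{N}(v)|\leq \bigl(\sum|b_{i_1,i_2,i_3,i_4}|\bigr)|v|^4$ by the triangle inequality. Your additional verification that the $b_{i_1,i_2,i_3,i_4}$ are integers is a harmless elaboration of the lemma's hypothesis and does not change the argument.
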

\begin{proof}
For
$v\in\ker F$ we have
$\mathscr{N}(v)\equiv 0 \pmod n$ and if $v\neq 0$ we must
therefore have $|\mathscr{N}(v)|\geq n$. On the other hand, if we
did not have~\eqref{E3}, then every component of $v$ would be
strictly less than the right-hand side and plugging this upper bound
in the definition of $|\mathscr{N}(v)|$ would yield a quantity $<n$, a
contradiction. \hfil\qed
\end{proof}

Let $B$ be the denominator of the right-hand side of~\eqref{E3},
then~\eqref{E2} and~\eqref{E3} imply that
\begin{equation}
\label{E:reduced}
|v_i| \leq 8B^{3}\, n^{1/4} \quad i=1,2,3,4 \enspace.
\end{equation}


\begin{remark}
In our case, where $\Psi^2+1=0$ and $\Phi^2+r\Phi +s=0$, we get as norm function
\begin{multline*}
x_{1}^{4} + s^2 x_2^4 +  x_{3}^{4} + s^2 x_4^4 
-2r x_1^3x_2 - 2rs x_1 x_2^3  -2r x_3^3x_4  -2rs x_3 x_4^3 + \\
(r^2+2s) x_1^2 x_2^2 + 2x_1^2x_3^2+  (r^2-2s) x_1^2x_4^2 +  (r^{2}-2s) x_{2}^{2} x_{3}^{2} + 2s^2x_2^2x_4^2 + (r^{2}+2s) x_{3}^{2} x_{4}^{2} \\
-2rx_1^2x_3x_4 - 2rs x_2^2x_3x_4 - 2r x_1x_2x_3^2 - 2rs x_1x_2x_4^2
+ 8s x_1x_2x_3x_4 \enspace,
\end{multline*}
and therefore
\begin{equation}
\label{E:B}
B=
\bigl(4+4s^2 + 8s + 8|r| + 8 |r| s + 2 (r^2+2s) + 2 |r^2-2s|\bigr)^{1/4} \enspace.
\end{equation}

\end{remark}

From~\eqref{E:1} and \eqref{E:reduced} we have proved the following theorem.
\begin{theorem}
\label{T:1}
Let $E/\mathbb{F}_p$ be a GLV curve and $E'/\mathbb{F}_{p^2}$ a twist, together with the two efficient endomorphisms $\Phi$ and $\Psi$, where everything is defined as at the start of Section~\ref{S:4GLV}. Suppose that the minimal polynomial of $\Phi$ is $X^2+rX+s=0$. Let $P\in E'(\mathbb{F}_{p^2})$ a generator of the large subgroup of prime order $n$. There exists an efficient algorithm, which for any $k\in [1,n]$ finds integers $k_1, k_2, k_3, k_4$ such that
$$
kP=k_1P+ k_2\Phi(P)+ k_3\Psi(P) + k_4\Psi\Phi(P)\quad \text{with } \max_i (|k_i|)\leq 16 B^3 n^{1/4}
$$
and
$$
B=
\bigl(4+4s^2 + 8s + 8|r| + 8 |r| s + 2 (r^2+2s) + 2 |r^2-2s|\bigr)^{1/4} \enspace.
$$
\end{theorem}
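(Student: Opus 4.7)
The plan is simply to assemble the ingredients already laid out in Section~\ref{S:4GLV}. All the substantive work (the lattice theory, the norm-form lower bound of Lemma~\ref{L:genlownumfield}, and the explicit evaluation of $B$ in the Remark) is already in place; what remains is to chain the bounds together carefully, paying particular attention to where the constants $8$ and $16$ arise.

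First I would run LLL on the explicit generating set $w_1,w_2,w_3,w_4$ of $\ker F$ displayed just before Lemma~\ref{L:genlownumfield}. This produces in time polynomial in $\log n$ an LLL-reduced basis $v_1,\dots,v_4$ of $\ker F$ satisfying the product estimate~\eqref{E2}, i.e., $\prod_i |v_i|\leq 8[\Z^4:\ker F]=8n$. Next, since each $v_i$ is a nonzero element of $\ker F$, Lemma~\ref{L:genlownumfield} applied to the norm form of $\Z[\Phi,\Psi]$ gives the pointwise lower bound $|v_i|\geq n^{1/4}/B$ with $B$ the explicit quantity of~\eqref{E:B}. Combining these two estimates, for each fixed $i$,
\[
|v_i|\;\leq\;\frac{8n}{\prod_{j\neq i}|v_j|}\;\leq\;\frac{8n}{(n^{1/4}/B)^3}\;=\;8B^3\,n^{1/4},
\]
which is exactly the reduced-basis bound~\eqref{E:reduced}.

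Second, I would apply the standard GLV rounding argument to this reduced basis: write $(k,0,0,0)=\sum_{j=1}^4 \beta_j v_j$ with $\beta_j\in\mathbb{Q}$, set $v=\sum_j \lfloor \beta_j\rceil v_j\in\ker F$, and define $(k_1,k_2,k_3,k_4)=u=(k,0,0,0)-v$. Since $u\in\ker F$, the very definition of $F$ gives $k\equiv k_1+k_2\lambda+k_3\mu+k_4\lambda\mu\pmod n$, and evaluating at $P$ (using $\Phi(P)=\lambda P$, $\Psi(P)=\mu P$) yields precisely the decomposition~\eqref{E:1}. The triangle inequality together with $|\beta_j-\lfloor\beta_j\rceil|\leq 1/2$ gives
\[
|u|\;\leq\;\tfrac12\sum_{j=1}^4 |v_j|\;\leq\;2\max_j |v_j|\;\leq\;16\,B^3\,n^{1/4},
\]
which is the claimed bound $\max_i|k_i|\leq 16 B^3 n^{1/4}$. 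Thus the constant $C$ in~\eqref{E:1} is $8B^3$, and the overall factor $2C=16B^3$ is exactly where the $16$ in the statement comes from.

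There is no serious obstacle here; the only thing to be careful about is not double-counting the factor of $2$: it appears once in passing from the product bound~\eqref{E2} and Lemma~\ref{L:genlownumfield} to~\eqref{E:reduced} (implicit in the $8$), and once in passing from the coefficient bound on a single reduced vector to the rounding error on $u$ (the factor $2$ in $2C$). The explicit formula for $B$ is then read off from the Remark, whose norm expansion is a routine, if tedious, computation in the biquadratic ring $\Z[\Phi,\Psi]$ using $\Psi^2=-1$ and $\Phi^2=-r\Phi-s$.
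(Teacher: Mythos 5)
Your proposal is correct and takes essentially the same route as the paper: LLL applied to the explicit basis $w_1,\dots,w_4$ gives the product bound~\eqref{E2}, Lemma~\ref{L:genlownumfield} gives the pointwise lower bound $n^{1/4}/B$, dividing the two yields~\eqref{E:reduced} with constant $8B^3$, and the rounding argument contributes the final factor $2$ (half the sum of four vectors, each at most $8B^3 n^{1/4}$), giving $16B^3$. The only slip is the phrase ``since $u\in\ker F$'': it is $v=(k,0,0,0)-u$ that lies in $\ker F$, so $F(u)=F((k,0,0,0))=k \bmod n$, which is exactly the congruence $k\equiv k_1+k_2\lambda+k_3\mu+k_4\lambda\mu \pmod n$ you then use.
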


\label{S:5}

\section{A Tale of Two Cornacchia Algorithms}

In view of the fact that the LLL algorithm is rather inefficient compared to other dedicated algorithms in dimension less than five (running in $O(\log^3 n)$), we can ask ourselves if we can sharpen the bound of Theorem~\ref{T:1} and provide an explicit description of a simpler algorithm to find a short basis of $\ker F$. 
This is the scope of the the present section. Our algorithm has a running time of $O(\log^2 n)$, and will produce a proved bound greatly improving the $16B^3$ of Theorem~\ref{T:1}.

The idea is to modify the original GLV approach which finds a short basis using an extended Euclidean algorithm. We find that in this case we need to perform two such algorithms, one in $\mathbb{Z}$, like in the GLV original paper, the other one in $\mathbb{Z}[i]$, the Gaussian integers. 
The main difficulty here lies in the correct choice of the remainders in the Gaussian gcd algorithm, since we don't have a canonical way to choose a ``positive" one.

In contrast to Section~\ref{S:5}, where we worked with generic endomorphisms $\Phi,\Psi$ generating a biquadratic field, we will strongly use here the fact that $\Psi^2+1=0$. We will denote indifferently by the letter $i$ the usual imaginary root of unity in $\mathbb{C}$, the integer mod $n$ such that $\Psi (P) = i P$, as well as the endomorphism $\Psi$. In particular, we let, for $z=a+ib\in\mathbb{Z}[i]$, $zP=aP+ibP= aP + b \Psi(P)$. The context in which we are referring to one or the other of these interpretations will be clear each time. These differences notwithstanding, we suppose that we are set as in the first paragraph of Section~\ref{S:5}. 

\subsection{The Euclidean Algorithm in $\mathbb{Z}$}
The first step is to find $\nu=a+ib \in \mathbb{Z}[i]$ such that $|\nu|^2 = a^2+b^2 = n$, i.e. a Gaussian prime above $n$. Recall that $n$ splits in $\mathbb{Z}[i]$. Let $\nu = a+ib$ a prime above $n$. We can furthermore assume that $\nu P = aP + bi P = aP + b \Psi(P) =0$, since $\nu\bar\nu P = nP=0$ and hence either $\bar\nu P$ is a nonzero multiple of $P$ and therefore $\nu P=0$, or else we $\bar\nu P=0$, so that in any case one of the Gaussian primes (WLOG $\nu$) above $n$ will have $\nu P=0$. We can find $\nu$ by Cornacchia's algorithm~\cite[Section 1.5.2]{Co}, which is a truncated form of the GLV algorithm. For completeness and consistency with what will follow, we recall how this is done. 

Let $\mu\in[1,n]$ such that $\mu \equiv i \pmod n$, with $i$ being defined by $\Psi(P)=iP$. Actually, in the GLS approach~\cite{GLS09}, it has been pointed out that this value of $\mu$ can be readily computed from $\#E(\f_p)$. The extended Euclidean algorithm to compute the gcd of $n$ and $\mu$ produces three terminating sequences of integers $(r_j)_{j\geq 0}, (s_j)_{j\geq 0}$ and $(t_j)_{j\geq 0}$ such that 
\begin{equation}
\label{E:gcdmatrix}
\begin{pmatrix}
r_{j+2} & s_{j+2} & t_{j+2} \\
r_{j+1} & s_{j+1} & t_{j+1}
\end{pmatrix}
=
\begin{pmatrix}
-q_{j+1} & 1\\
1 & 0
\end{pmatrix}
\begin{pmatrix}
r_{j+1} & s_{j+1} & t_{j+1}\\
r_{j} & s_{j} & t_{j}
\end{pmatrix}\enspace,
\quad j\geq 0
\end{equation}
for some integer $q_{j+1}>0$ and initial data
\begin{equation}
\label{E:initial}
\begin{pmatrix}
r_{1} & s_{1} & t_{1}\\
r_{0} & s_{0} & t_{0}
\end{pmatrix}
=
\begin{pmatrix}
\mu & 0 & 1\\
n & 1 & 0
\end{pmatrix} \enspace.
\end{equation}
This means that at step $j\geq 0$,
$$
r_j = q_{j+1} r_{j+1} + r_{j+2}
$$
and similarly for the other sequences. The sequence $(q_j)_{j\geq 1}$ is \emph{uniquely} defined by imposing that the previous equation be the integer division of $r_j$ by $r_{j+1}$. In other terms,
$q_{j+1} = \lfloor r_j/r_{j+1} \rfloor$. This implies by induction that all the sequences are well defined in the integers, together with the following properties.
\begin{lemma}
\label{L:gcdprop}
The sequences $(r_j)_{j\geq 0}, (s_j)_{j\geq 0}$ and $(t_j)_{j\geq 0}$ defined by~\eqref{E:gcdmatrix} and~\eqref{E:initial} with $q_{j+1} = \lfloor r_j/r_{j+1} \rfloor$ satisfy the following properties, valid for all $j\geq 0$.
\begin{enumerate}
\item $r_j > r_{j+1} \geq 0$ and $q_{j+1} \geq 1$,
\item $(-1)^j s_j \geq 0$ and $|s_j| < |s_{j+1}|$ (this last inequality valid for $j\geq 1$),
\item $(-1)^{j+1} t_j \geq 0$ and $|t_j| < |t_{j+1}|$,
\item $s_{j+1}r_j - s_jr_{j+1} = (-1)^{j+1} r_1$,
\item $t_{j+1} r_j - t_j r_{j+1} = (-1)^j r_0$,
\item $r_0 s_j + r_1 t_j = r_j$.
\end{enumerate}
\end{lemma}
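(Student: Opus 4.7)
The plan is to prove all six items by induction on $j$, exploiting the fact that the matrix identity~\eqref{E:gcdmatrix} unpacks into a single common recurrence $x_{j+2}=x_j-q_{j+1}x_{j+1}$ for each $x\in\{r,s,t\}$, so that the three sequences differ only through their initial data~\eqref{E:initial}.

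I would first dispatch item~1, which is essentially a reformulation of the choice $q_{j+1}=\lfloor r_j/r_{j+1}\rfloor$: by construction $r_{j+2}$ is the integer-division remainder of $r_j$ by $r_{j+1}$, so $0\leq r_{j+2}<r_{j+1}$; and $q_{j+1}\geq 1$ follows because $r_j>r_{j+1}$ at every step, starting from the initial inequality $r_0=n>\mu=r_1$. This is the only item that exploits the specific definition of $q_{j+1}$; all subsequent items need only $q_{j+1}\geq 1$ as a black box.

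Items~2 and~3 can then be handled together by a joint induction on $j$, since $(s_j)$ and $(t_j)$ obey the same recurrence and differ only through a one-index shift of the initial data $(s_0,s_1)=(1,0)$ versus $(t_0,t_1)=(0,1)$. The inductive key is that once the claimed sign pattern holds at stages $j$ and $j+1$, the two summands in $s_{j+2}=s_j-q_{j+1}s_{j+1}$ carry the common sign $(-1)^j$, which simultaneously yields the correct sign $(-1)^{j+2}$ for $s_{j+2}$ and the magnitude identity $|s_{j+2}|=|s_j|+q_{j+1}|s_{j+1}|$; combined with $q_{j+1}\geq 1$ this gives the strict growth. The argument for $t_j$ is identical, shifted by one index.

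Items~4, 5, 6 are algebraic identities, and each admits a one-line inductive step: after substituting the recurrence for the highest-indexed entry on the left-hand side, the $q_{j+1}$-terms cancel and one recovers, up to a single sign flip, the same expression at the previous step. The base cases at $j=0$ reduce through~\eqref{E:initial} to direct arithmetic checks such as $s_1r_0-s_0r_1=-\mu=(-1)^1 r_1$ and $r_0s_0+r_1t_0=n=r_0$. I do not anticipate any real obstacle; the only minor care point is the degeneracy $s_1=0$, which is precisely why the strict magnitude inequality in item~2 is stated only for $j\geq 1$.
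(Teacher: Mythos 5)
Your overall route---unpacking the matrix identity into the single recurrence $x_{j+2}=x_j-q_{j+1}x_{j+1}$ for $x\in\{r,s,t\}$, proving item~1 from the choice $q_{j+1}=\lfloor r_j/r_{j+1}\rfloor$, and treating items~4--6 as telescoping identities with base cases checked from~\eqref{E:initial}---is exactly the standard induction the paper leaves implicit (it offers no written proof, merely asserting the properties follow by induction), and items~1, 4, 5, 6 and the sign claims in 2, 3 go through as you sketch them.

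The soft spot is your justification of the strict magnitude growth. The common sign of the two summands gives $|s_{j+2}|=|s_j|+q_{j+1}|s_{j+1}|$, but ``combined with $q_{j+1}\geq 1$'' this only yields $|s_{j+2}|\geq |s_j|+|s_{j+1}|$, which exceeds $|s_{j+1}|$ \emph{strictly} only when $s_j\neq 0$. So the degeneracy $s_1=0$ is not disposed of by the restriction to $j\geq 1$ as you suggest: at $j=1$ the comparison $|s_1|=0<1=|s_2|$ is trivially fine, but $s_1=0$ is precisely what makes the \emph{next} comparison delicate, since $|s_3|=q_2$ while $|s_2|=1$, so strictness at $j=2$ requires $q_2\geq 2$, which $q_{j+1}\geq 1$ does not provide. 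Indeed it can fail: for $n=13$, $\mu=5$ one gets $q_2=1$ and $|s_2|=|s_3|=1$; similarly $|t_1|<|t_2|$ needs $q_1\geq 2$, i.e.\ $\mu<n/2$, which the input specification $1<\mu<n$ does not force. Thus your inductive step, as written, breaks at these first comparisons (only the non-strict inequality is available there, with strictness from the first index at which both terms are nonzero); this is an edge case of the statement itself inherited from the GLV literature, and it does not affect what the paper actually uses later, namely the sign pattern behind~\eqref{E:pos}, but a careful write-up should flag it rather than claim strict growth from $q_{j+1}\geq 1$ alone.
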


These properties lie at the heart of the original GLV algorithm. They imply in particular via 1. that the algorithm terminates (once $r_j$ reaches zero), and that it has $O(\log n)$ steps, as 
$r_j = q_{j+1}r_{j+1} + r_{j+2} \geq r_{j+1} + r_{j+2} > 2 r_{j+2}$. Note that 1., 2. \& 3. imply that 4. \& 5. can be rewritten in our case respectively as 
\begin{equation}
\label{E:pos}
|s_{j+1} r_j| + |s_j r_{j+1}| = \mu  \quad \text{and} \quad |t_{j+1} r_j| + |t_j r_{j+1}| = n \enspace.
\end{equation}
The Cornacchia (as well as the GLV) algorithm doesn't make use of the full sequences $(r_j), (s_j)$ and $(t_j)$ but rather stops at the $m\geq 0$ such that $r_{m}\geq \sqrt n$ and $r_{m+1}< \sqrt n$. An application of~\eqref{E:pos} with $j=m$ yields $|t_{m+1} r_{m}| < n$ or $|t_{m+1}| < \sqrt n$.  Since by 6. we have $r_{m+1} - \mu t_{m+1} = ns_{m+1} \equiv 0 \pmod n$ we deduce that $r_{m+1}^2+ t_{m+1}^2=(r_{m+1}-\mu t_{m+1})(r_{m+1}+\mu t_{m+1}) \equiv 0 \pmod n$. Moreover $t_{m+1} \neq 0$ by 3. so that
$
0< r_{m+1}^2+t_{m+1}^2 < n + n = 2n
$
which therefore implies that $r_{m+1}^2+ t_{m+1}^2 = n$ and finally that $\nu = r_{m+1} - i t_{m+1}$.

We present here the pseudo-code of this Euclidean algorithm in $\mathbb{Z}$.

\begin{algo}[Cornacchia's GCD in $\Z$]
\label{A:1}
\rule{\linewidth}{1pt}
{\tt Input:} $n\equiv 1 \pmod 4$ prime, $1<\mu<n$ such that $\mu^2\equiv -1 \pmod n$.\\
{\tt Output:} $\nu=\nu_{(R)} + i \nu_{(I)}$ Gaussian prime dividing $n$, such that $\nu P=0$. \\
\rule{\linewidth}{.5pt}
\begin{enumerate}
\item {\bf initialize:}\\
$r_0 \ot n$, $r_1 \ot \mu$, $r_2 \ot n$, \\
$t_0 \ot 0$, $t_1 \ot 1$, $t_2 \ot 0$,\\
$q \ot 0$.
\item  {\bf main loop:}\\
{\tt while} $r_2^2\geq n$ {\tt do}\\
\ind $q \ot \lfloor r_0/r_1 \rfloor$,\\
\ind $r_2 \ot r_0 - q r_1$, $r_0 \ot r_1$, $r_1 \ot r_2$,\\
\ind $t_2 \ot t_0 - q t_1$, $t_0 \ot t_1$, $t_1 \ot t_2$.
\item {\bf return:}\\
$\nu= r_1 - i t_1$, $\nu_{(R)}= r_1$, $\nu_{(I)}=-t_1$
\end{enumerate}
\rule{\linewidth}{1pt}
\end{algo}

\subsection{The Euclidean Algorithm in $\mathbb{Z}[i]$}
In the previous subsection we have given a meaning to $zP$, where $z\in\mathbb{Z}[i]$, and we have seen how to construct $\nu$, a Gaussian prime such that $\nu P =0$. By identifying\footnote{It is important to keep in mind that this association is only an isomorphism of abelian groups ($\Z$-modules). However, $\Z[i]^2$ is also endowed with a structure of $\Z[i]$-module.} $(x_1, x_2, x_3, x_4) \in\mathbb{Z}^4$ with $(z_1, z_2)= (x_1+ix_3, x_2 +ix_4)\in \mathbb{Z}[i]^2$, we can rewrite the 4-GLV reduction map $F$ of Section~\ref{S:5} as (using the same letter $F$ by abuse of notation)
\[ \begin{array}{rcl}
 F\colon\mathbb{Z}[i]^2 &\to &\Z[i]/\nu \cong \Z/n \\
 (z_1, z_2) & \mapsto & z_1 +\lambda z_2 \pmod \nu\enspace.
 \end{array}
 \]
 This $F$ should be confronted to the map $\brho$ of Section~\ref{S:2}. In mimicking the GLV original paper~\cite{GLV01} we are drawn to applying the extended Euclidean algorithm (defined exactly as before, with integer divisions occurring in $\Z[i]$, henceforth denoted EGEA in short for extended Gaussian Euclidean algorithm) to the pair $(r_0, r_1)=(\lambda, \nu)$ if $\lambda \geq\sqrt2\, |\nu|$ and $(r_0, r_1)=(\lambda+n, \nu)$ otherwise (the latter case being exceptionally rare). We should note that 4., 5. \& 6. of Lemma~\ref{L:gcdprop} still hold and 1. holds in modulus (in particular the algorithm terminates). However, in the analysis of this algorithm, especially in~\cite{SCQ02}, a crucial r\^ole is played by~\eqref{E:pos}, realising a bound on  $|s_{j+1}r_j|$ and  $|s_jr_{j+1}|$  out of a bound on 

\begin{equation}
\label{E:cruxgauss}
 s_{j+1}r_j - s_jr_{j+1} = (-1)^{j+1}\nu
\end{equation}
in the present case. This fact, as we saw, stems from the alternating sign of the sequence $(s_j)$, which results from taking a canonical form of integer division with positive quotients $q_{j+1}$ and nonnegative remainders $r_{j+2}$, a property which is not available here. Nevertheless, we can still use a similar reasoning using~\eqref{E:cruxgauss}, provided that the arguments of $s_{j+1}r_j$ and $s_j r_{j+1}$ are not too close, so as to avoid a high degree of cancellation.
 

The first observation is that in the case of Gaussian integers there can be 2, 3 or 4 possible choices for a remainder in the $j$-th step of the integer division $r_j = q_{j+1} r_{j+1} + r_{j+2}$. It turns out that choosing at each step $j\geq0$ of the EGEA a remainder $r_{j+2}$ with smallest modulus will yield the following decomposition theorem.

\begin{theorem}
\label{T:3}
In the notations of Theorem~\ref{T:1}, the lattice reduction consisting of Cornacchia's algorithm in $\Z$ with positive remainders (Algorithm~\ref{A:1}) and in $\Z[i]$ with smallest remainders (Algorithm~\ref{A:2} and~\ref{A:3}) runs in $O(\log^2 n)$ binary operations and will result in a decomposition of any $k\in [1,n]$ into integers $k_1, k_2, k_3, k_4$ such that
$$
kP=k_1P+ k_2\Phi(P)+ k_3\Psi(P) + k_4\Psi\Phi(P)
$$
with
$$
 \max_i (|k_i|)< 103 \left(\sqrt{1+|r|+s}\right) n^{1/4} \enspace.
$$
\end{theorem}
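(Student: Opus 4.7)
The plan is to use the identification $\mathbb{Z}^4 \cong \mathbb{Z}[i]^2$ set up earlier to rewrite $\ker F$ as the kernel of a one-dimensional Gaussian GLV-type map modulo the Gaussian prime $\nu$ produced by Algorithm \ref{A:1}, and then run a GLV-style extended Euclidean algorithm over $\mathbb{Z}[i]$ (Algorithms 2--3) on the pair $(\lambda, \nu)$ to produce a short $\mathbb{Z}[i]$-basis of $\ker F$. The final scalar decomposition then arises from one round of Gaussian rounding in this basis.

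For the complexity claim, Algorithm \ref{A:1} is a classical extended Euclidean algorithm on $O(\log n)$-bit inputs and costs $O(\log^2 n)$ binary operations. For the Gaussian EGEA the decisive observation is that selecting a smallest-modulus remainder at each step forces the geometric estimate $|r_{j+2}| \leq |r_{j+1}|/\sqrt 2$, since any Gaussian integer lies within distance $1/\sqrt 2$ of the nearest $\mathbb{Z}[i]$-multiple of $r_{j+1}$. This bounds the number of iterations by $O(\log n)$, and the standard telescoping analysis of Gaussian divisions then gives the total $O(\log^2 n)$ bit complexity.

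Next I would build the kernel vectors and extract a short $\mathbb{Z}[i]$-basis. Properties 4 and 6 of Lemma \ref{L:gcdprop} persist in $\mathbb{Z}[i]$ without the sign conditions. From property 6 with $(r_0, r_1) = (\lambda, \nu)$ we get $s_j \lambda + t_j \nu = r_j$, hence $s_j \lambda \equiv r_j \pmod{\nu}$, so $v_j := (-r_j, s_j) \in \mathbb{Z}[i]^2$ lies in $\ker F$; property 4 yields $\det(v_m, v_{m+1}) = \pm \nu$, whose Gauss norm $n$ equals the $\mathbb{Z}[i]$-index of $\ker F$, ensuring that any consecutive pair is a $\mathbb{Z}[i]$-basis of $\ker F$. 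I would stop at the smallest index $m$ with $|r_{m+1}|$ below a threshold $T$ proportional to $\sqrt{1+|r|+s}\, n^{1/4}$, so that $|r_m| \geq T > |r_{m+1}|$.

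The main obstacle is bounding $|s_m|$ and $|s_{m+1}|$, since the identity $s_{m+1} r_m - s_m r_{m+1} = \pm \nu$ conveys no useful information if the two terms have almost equal arguments: without sign alternation, the triangle inequality can collapse. The heart of the argument is to show that the smallest-modulus remainder rule constrains the arguments of $r_{j+1}/r_j$ and $s_{j+1}/s_j$ enough to preclude such cancellation, yielding an explicit bound of the form $\max(|s_{m+1} r_m|, |s_m r_{m+1}|) \leq C_0\, |\nu|$ for a small absolute constant $C_0$. Combined with $|r_m| \geq T$, this produces $|s_m|, |s_{m+1}| = O(\sqrt n / T)$, and the threshold $T \asymp \sqrt{1+|r|+s}\, n^{1/4}$ is chosen precisely to balance the $r$- and $s$-side contributions, accounting for the $\sqrt{1+|r|+s}$ factor in the theorem (analogously to how the $\sqrt s$-scaling arose via the norm form in Theorem \ref{T:1}). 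To conclude, write $(k,0) = \alpha v_m + \beta v_{m+1}$ with $\alpha,\beta \in \mathbb{Q}(i)$, round each to the nearest Gaussian integer (modulus error at most $1/\sqrt 2$), and take the remainder: its coordinatewise complex modulus, which upper-bounds the rectangle norm in $\mathbb{Z}^4$, is at most $(|v_m|+|v_{m+1}|)/\sqrt 2$. Tracking all explicit constants through the threshold, the $1/\sqrt 2$ rounding loss, the $1/\sqrt 2$ geometric decrease, and the anti-cancellation constant $C_0$ should assemble into the stated $103$.
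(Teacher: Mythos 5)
Your overall framework matches the paper's: identify $\Z^4$ with $\Z[i]^2$, get $\nu$ from Cornacchia in $\Z$, run the Gaussian extended Euclidean algorithm on $(\lambda,\nu)$, note that $(r_j,-s_j)\in\ker F$ via property 6 of Lemma~\ref{L:gcdprop}, stop at a threshold of size $\sqrt{|\nu|}/\sqrt{1+|r|+s}$, and you even correctly isolate the real difficulty, namely that $s_{m+1}r_m-s_mr_{m+1}=\pm\nu$ gives nothing when the two terms nearly cancel. But at exactly that point the proposal stops being a proof: you assert that the smallest-remainder rule "constrains the arguments \dots enough to preclude such cancellation, yielding $\max(|s_{m+1}r_m|,|s_mr_{m+1}|)\leq C_0|\nu|$," and offer no argument. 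This claim, taken step-by-step as you state it, is in fact not available: the paper must explicitly define \emph{bad} indices $j$, i.e. steps at which, for \emph{every} admissible choice of the quotient (every vertex of the lattice square containing $r_{j-1}/r_j$), the arguments of $s_{j+1}r_j$ and $s_jr_{j+1}$ differ by less than $\Theta=\arctan 2-\pi/3$, so that no uniform anti-cancellation inequality can be extracted at that step. The entire technical core of the theorem is the machinery handling this: Lemma~\ref{L:geo} (the geometric property of lattice squares with the angle $\theta\approx 2.45861$), Lemma~\ref{L:good} (anti-cancellation with $A=1/\sin\Theta$ at good steps), and above all Lemma~\ref{L:bad}, whose region-by-region analysis (including the eight exceptional lattice squares) shows that at bad steps $|s_j|\leq|s_{j-1}|$ and $|s_{j+1}|\leq 2\sqrt2\,|s_{j-1}|$, so one can backtrack to the last good index and still bound $|s_m|,|s_{m+1}|$ by $2\sqrt2(A+1)\sqrt{1+|r|+s}\,\sqrt{|\nu|}$. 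Without this (or some substitute), no explicit constant — let alone $103$ — can be "assembled by tracking constants."

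A secondary gap: your sketch bounds $|s_m|,|s_{m+1}|$ and has $|r_{m+1}|$ below the threshold, but never bounds $|r_m|$ from above, which is needed for the rectangle norm of $v_m=(r_m,-s_m)$; a priori $|r_m|$ could be much larger than the threshold. The paper closes this by combining $|s_{m+1}r_m|\leq(1+2\sqrt2(A+1))|\nu|$ with the lower bound $|s_{m+1}|\geq\sqrt{|\nu|}/\sqrt{1+|r|+s}$ from Lemma~\ref{L:genlow} (the norm-form argument you only allude to when motivating the threshold). Your complexity discussion and the final rounding step (Gaussian rounding is legitimate since $\ker F$ is a $\Z[i]$-module, and even saves a factor over the paper's real rounding) are fine, but the heart of the proof is missing.
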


We give here the pseudo-code of Cornacchia's Algorithm in $\Z[i]$ in two forms, working with complex numbers and separating real and imaginary parts.

\begin{algo}[\protect{Cornacchia's algorithm in $\Z[i]$} - compact form]
\label{A:2}
\rule{\linewidth}{1pt}
{\tt Input:} $\nu$ Gaussian prime dividing $n$ rational prime, $1<\lambda<n$ such that $\lambda^2+r\lambda +s \equiv 0 \pmod n$.\\
{\tt Output:} Two $\Z[i]$-linearly independent vectors $v_1$ \& $v_2$ of $\ker F \subset \Z[i]^2$ of rectangle norms $< 51.5 (\sqrt{1+|r|+s})\, n^{1/4}$. \\
\rule{\linewidth}{.5pt}
\begin{enumerate}
\item {\bf initialize:}\\
If $\lambda^2\geq 2n$ then\\
\ind $r_0 \ot \lambda$, \\
else\\
\ind $r_0 \ot \lambda+n$,\\
$r_1 \ot \nu$, $r_2 \ot n$, \\
$s_0 \ot 1$, $s_1 \ot 0$, $s_2 \ot 0$,\\
$q \ot 0$.
\item  {\bf main loop:}\\
{\tt while} $|r_2|^4(1+|r|+s)^2\geq n$ {\tt do}\\
\ind $q \ot \text{closest Gaussian integer to } r_0/r_1$,\\
\ind $r_2 \ot r_0 - q r_1$, $r_0 \ot r_1$, $r_1 \ot r_2$,\\
\ind $s_2 \ot s_0 - q s_1$, $s_0 \ot s_1$, $s_1 \ot s_2$.
\item {\bf return:}\\
$v_1= (r_0, -s_0)$, $v_2= (r_1, -s_1)$
\end{enumerate}
\rule{\linewidth}{1pt}
\end{algo}

\begin{algo}[\protect{Cornacchia's algorithm in $\Z[i]$} - real \& imaginary parts]
\label{A:3}
\rule{\linewidth}{1pt}
{\tt Input:} $\nu$ Gaussian prime dividing $n$ rational prime, $1<\lambda<n$ such that $\lambda^2+r\lambda +s \equiv 0 \pmod n$.\\
{\tt Output:} Four $\Z$-linearly independent vectors $v_1$, $v_2$, $v_3$ and $v_4 \in \ker F \subset \Z^4$ of rectangle norms $< 51.5 (\sqrt{1+|r|+s})\, n^{1/4}$. \\
\rule{\linewidth}{.5pt}
\begin{enumerate}
\item {\bf initialize:}\\
If $\lambda^2\geq 2n$ then\\
\ind $r_{0,(R)} \ot \lambda$, \\
else\\
\ind $r_{0,(R)} \ot \lambda+n$,\\
$r_{0,(I)} \ot 0$,\\
$r_{1,(R)} \ot \nu_{(R)}$, $r_{1,(I)} \ot \nu_{(I)}$,\\
$r_{2,(R)} \ot n$, $r_{2,(I)} \ot 0$, \\
$s_{0,(R)} \ot 1$, $s_{0,(I)} \ot 0$,\\
$s_{1,(R)} \ot 0$, $s_{1,(I)} \ot 0$, \\
$s_{2,(R)} \ot 0$, $s_{2,(I)} \ot 0$,\\
$q_{(R)} \ot 0$, $q_{(I)} \ot 0$.
\item  {\bf main loop:}\\
{\tt while} $(r_{2,(R)}^4+ 2r_{2,(R)}^2r_{2,(I)}^2 + r_{2,(I)}^4)(1+|r|+s)^2\geq n$ {\tt do}\\
\ind $q_{(R)} \ot \left\lceil \dfrac{r_{0,(R)} r_{1,(R)} + r_{0,(I)}r_{1,(I)}}{r_{1,(R)}^2 + r_{1,(I)}^2}\right \rfloor$,\\
\\
\ind $q_{(I)} \ot \left\lceil \dfrac{r_{0,(I)} r_{1,(R)} - r_{0,(R)}r_{1,(I)}}{r_{1,(R)}^2 + r_{1,(I)}^2}\right \rfloor$, \\
\ind $r_{2,(R)} \ot r_{0,(R)}- (q_{(R)}r_{1,(R)} - q_{(I)} r_{1,(I)})$,\\
\ind $r_{2,(I)} \ot r_{0,(I)}- (q_{(R)}r_{1,(I)} + q_{(I)} r_{1,(R)})$,\\
\ind $r_{0,(R)} \ot r_{1,(R)}$, $r_{1,(R)} \ot r_{2,(R)}$,\\
\ind $r_{0,(I)} \ot r_{1,(I)}$, $r_{1,(I)} \ot r_{2,(I)}$,\\
\ind $s_{2,(R)} \ot s_{0,(R)}- (q_{(R)}s_{1,(R)} - q_{(I)} s_{1,(I)})$,\\
\ind $s_{2,(I)} \ot s_{0,(I)}- (q_{(R)}s_{1,(I)} + q_{(I)} s_{1,(R)})$,\\
\ind $s_{0,(R)} \ot s_{1,(R)}$, $s_{1,(R)} \ot s_{2,(R)}$,\\
\ind $s_{0,(I)} \ot s_{1,(I)}$, $s_{1,(I)} \ot s_{2,(I)}$.
\item {\bf return:}\\
$v_1= (r_{0,(R)}, -s_{0,(R)}, r_{0,(I)}, -s_{0,(I)})$,
$v_2= (r_{1,(R)}, -s_{1,(R)}, r_{1,(I)}, -s_{1,(I)})$,\\
$v_3= (-r_{0,(I)}, s_{0,(I)}, r_{0,(R)}, -s_{0,(R)})$,
$v_4= (-r_{1,(I)}, s_{1,(I)}, r_{1,(R)}, -s_{1,(R)})$.
\end{enumerate}
\rule{\linewidth}{1pt}
\end{algo}

\section{Proof of Theorem~\ref{T:3}}

This section is devoted to proving that Algorithms~\ref{A:2} and~\ref{A:3} produce a reduced basis of $\ker F$ of rectangle norm $< 51.5 (\sqrt{1+|r|+s})\, n^{1/4}$. The proof of the decomposition of $k$ follows from the deduction recalled in Section~\ref{S:4GLV}.

Let us note first, about the running time, that it is known that the extended Euclidean algorithm runs in $O(\log^2 n)$ bits. The same analysis will also show that its Gaussian version runs in $O(\log^2n)$ bits, since its number of steps is also logarithmic. In short, this works as follows: if $b_j=\lfloor \log_2 (|r_j|) \rfloor$ (i.e. the bitsize of $|r_j|$), then step $j$ of the EGEA necessitates to find $q_{j+1}$ and then $r_{j+2}$. One can show that integer division of two $h$-bit Gaussian integers with a $\ell$-bit quotient runs in $O(h(\ell+1))$ binary operations.  Finding $q_{j+1}$ has therefore a runtime $O(b_j (c_{j+1}+1))$, where $c_{j+1}= \lfloor \log_2 (|q_{j+1}|) \rfloor = b_j-b_{j+1} +O(1)$. Similarly, knowing $q_{j+1}$, computing $r_{j+2}$ can be done in $O(b_{j+1}c_{j+1}) + O(b_{j+1}) = O(b_{j+1} (b_{j}-b_{j+1})) +O(b_{j+1})$.
If $S= O(\log n)$ is the number of steps of the EGEA, the total runtime is less than a constant times
$$
\sum_{j=0}^S
b_{j} (b_{j}-b_{j+1})+ b_{j}  = O( b_0 ^2 + b_0 S) = O(\log^2 n) \enspace.
$$

In the following, whenever $z\in\mathbb{C}^*$, its argument value $\arg(z)$ will be always chosen in $(-\pi, \pi]$. By \emph{lattice square} we mean a square of side length one with vertices in $\Z[i]$. We single out eight \emph{exceptional lattice squares}, which are those lattice squares with a vertex of modulus 1 (that is $\pm1$ or $\pm i$) but not containing the origin as a vertex.
Our analysis of the EGEA rests on the following lemmas.

\begin{lemma}[A geometric property of squares]
\label{L:geo}
There exists an absolute real constant $\theta\approx 2.45861$ (with $2\arctan2<\theta$) such that,
for any point $P$ of a lattice square, different from the vertices, letting $V_1$ be the closest vertex to $P$, there exists another vertex $V_2 \neq V_1$ with $\theta\leq \widehat{V_1PV_2} \leq \pi$. (Note that $V_1P \leq 1/\sqrt2$.)
\end{lemma}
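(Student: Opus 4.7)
The plan is, after a symmetry reduction, to show that
$g(P):=\max_{i\in\{2,3,4\}}\widehat{V_1PV_i}$
attains its minimum on the Voronoi boundary $y=1/2$ of the fundamental region, where the problem reduces to a one-variable minimax.

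Normalize the unit lattice square to have vertices $V_1=(0,0)$, $V_2=(1,0)$, $V_3=(0,1)$, $V_4=(1,1)$. By the dihedral symmetry of the square, one may relabel so that $V_1$ is a closest vertex to $P$, placing $P=(x,y)\in[0,1/2]^2$; the reflection across the diagonal $y=x$ fixes $\{V_1,V_4\}$ and swaps $V_2,V_3$, so we may further assume $x\le y$. Thus $P$ lies in the triangle $T:=\{0\le x\le y\le 1/2\}$. On the edge $x=0$ one has $V_1,P,V_3$ collinear, so $\widehat{V_1PV_3}=\pi$; on the diagonal $y=x>0$ one has $V_1,P,V_4$ collinear, so $\widehat{V_1PV_4}=\pi$. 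In both boundary subcases $g(P)=\pi\ge\theta$, so we may assume $0<x<y\le 1/2$.

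The three angles admit the arctangent expressions
\begin{align*}
\widehat{V_1PV_2} &= \pi-\arctan(y/x)-\arctan(y/(1-x)),\\
\widehat{V_1PV_3} &= \arctan(y/x)+\arctan((1-y)/x),\\
\widehat{V_1PV_4} &= \pi-\arctan(y/x)+\arctan((1-y)/(1-x)).
\end{align*}
From these, $\widehat{V_1PV_4}-\widehat{V_1PV_2}=\arctan(y/(1-x))+\arctan((1-y)/(1-x))\ge 0$, so $V_2$ never attains the max on $T$ and $g=\max(\widehat{V_1PV_3},\widehat{V_1PV_4})$. Using $y\le 1/2\le 1-y$, one computes $\partial_y\widehat{V_1PV_3}=x/(x^2+y^2)-x/(x^2+(1-y)^2)\ge 0$ and $\partial_y\widehat{V_1PV_4}=-x/(x^2+y^2)-(1-x)/((1-x)^2+(1-y)^2)<0$ throughout $T$. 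Hence along each vertical slice $\{x=\text{const}\}\cap T$, the max of the two angles starts at $\pi$ (at $y=x$), decreases as $\widehat{V_1PV_4}$ decreases, and possibly begins to increase as $\widehat{V_1PV_3}$ takes over; the slice minimum is attained either at $y=1/2$ or at the unique $y_c$ where $\widehat{V_1PV_3}=\widehat{V_1PV_4}$. A further analysis along the coincidence curve shows that its common angle strictly decreases as one moves to the $y=1/2$ endpoint, so the global minimum of $g$ on $T$ lies on the segment $L:=\{(x,1/2):0<x\le 1/2\}$.

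On $L$, $\cos\widehat{V_1PV_3}=(x^2-1/4)/(x^2+1/4)$ is strictly increasing in $x$ (so $\widehat{V_1PV_3}$ decreases from $\pi$ to $\pi/2$), and $\cos\widehat{V_1PV_4}=(x^2-x-1/4)/\sqrt{(x^2+1/4)((1-x)^2+1/4)}$ is strictly decreasing (so $\widehat{V_1PV_4}$ increases from $\arccos(-1/\sqrt 5)=2\arctan 2$ to $\pi$). The minimax occurs at the unique crossing $x^*\in(0,1/2)$, and equating the two cosines, squaring and clearing denominators yields the cubic
$$
16x^3-12x^2-4x+1=0,
$$
whose unique root in $(0,1/2)$ is $x^*\approx 0.1777$. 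Substituting back gives $\theta=\arccos((x^{*2}-1/4)/(x^{*2}+1/4))\approx 2.45861$; since $\cos\theta\approx -0.7756<-3/5=\cos(2\arctan 2)$ and $\cos$ is decreasing on $[0,\pi]$, we obtain $\theta>2\arctan 2$ as required.

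The delicate step is the coincidence-curve monotonicity invoked to pass from slice-wise minima to the global minimum on $L$; rigorously this needs either an explicit implicit-function computation along the interior curve $\{\widehat{V_1PV_3}=\widehat{V_1PV_4}\}\cap T$ or a separate argument ruling out all interior critical configurations of the piecewise-smooth max function. A cleaner self-contained route bypasses this via the inscribed-angle theorem: for each $j\in\{2,3,4\}$ the locus $\{P:\widehat{V_1PV_j}=\theta\}$ is a circular arc through $V_1$ and $V_j$, and the region $\{P:\widehat{V_1PV_j}\ge\theta\}$ is the circular segment between the chord $V_1V_j$ and its arc. The lemma then reduces to verifying that these three circular segments jointly cover $T$, a finite check involving three explicit arcs.
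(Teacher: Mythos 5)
Your calculus approach is structurally sound and arrives, via the substitution $u=\cot(\theta/2)=2x^*$, at the paper's cubic $2u^3-3u^2-2u+1=0$ (indeed $16x^3-12x^2-4x+1=0$ under $u=2x$), but it is not a complete proof: the step you flag yourself --- that the common value of $\widehat{V_1PV_3}=\widehat{V_1PV_4}$ along the coincidence curve is minimized at its $y=1/2$ endpoint --- is precisely what separates your slice-wise analysis from a global minimum statement, and it is left unproved. Without it, you have only shown that the infimum of $g$ on $T$ is at most the value at $(x^*,1/2)$, not that it equals it. The ``cleaner self-contained route'' you sketch in the last paragraph (inscribed-angle loci: for each $j$ the set $\{\widehat{V_1PV_j}\ge\theta\}$ is a circular segment, and the three segments cover $T$) is in fact the paper's actual proof --- it works with the arcs through pairs of vertices depicted in its figure and verifies the covering. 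So you have correctly identified both routes and the correct constant, but you would need to either (i) establish the monotonicity along $\{\widehat{V_1PV_3}=\widehat{V_1PV_4}\}\cap T$ (e.g., by an implicit differentiation along that curve), or (ii) carry out the finite arc-covering check, to make the argument complete.

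One small slip: you write $\arccos(-1/\sqrt5)=2\arctan2$, but $\cos(2\arctan2)=-3/5\ne-1/\sqrt5$; the left-hand side is $\approx2.034$ while $2\arctan2\approx2.214$. This does not propagate (your final comparison $\cos\theta<-3/5$ is correct), but the asserted equality is false.
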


\begin{figure}[!ht]
\begin{center}
\psset{unit=2.5}
\radians
\SpecialCoor
\pspicture(-1,0)(3,2)
\pspolygon(0,0)(2,0)(2,2)(0,2)
\psarc(1,3.229095){1.58451}{-2.25378}{-0.8878}
\psarc(-1.229095,1){1.58451}{-0.68298}{0.68298}
\psarc(1,-1.229095){1.58451}{0.8878}{2.25378}
\psarc(3.229095,1){1.58451}{2.45861}{-2.45861}
\psarc(-0.2291,-0.2291){2.240837}{0.102415}{1.46838}
\psarc(2.2291,2.2291){2.240837}{3.244}{-1.67321}
\psarc(2.2291,-0.2291){2.240837}{1.67321}{-3.244}
\psarc(-0.2291,2.2291){2.240837}{-1.46838}{-0.102415}
\degrees
\psarc[linestyle=dotted](0,0){1.4142}{0}{90}
\psarc[linestyle=dotted](2,2){1.4142}{180}{270}
\psarc[linestyle=dashed](2,0){1.4142}{90}{180}
\psarc[linestyle=dashed](0,2){1.4142}{270}{0}
\psdots(1,1.644584)
\uput{7pt}[100](1,1.644584){$R$}
\endpspicture
\end{center}
\caption{}
\label{F:1}
\end{figure}

\begin{proof}

This is one case where a picture is worth one thousand words. We refer to Figure~\ref{F:1} for a visual explanation of why the argument works. The dotted and dashed circle arcs are centred on the vertices and have radius $1/\sqrt2$. The plain circle arcs have the following property: for any point $P$ on them,   the two square vertices $V$ and $V'$ belonging to them make an angle of $\theta$ with $P$, in other terms $|\widehat{V P V'}| =\theta$. Therefore points between two bigger arcs (in one of the two almond-shaped regions) ``look'' at the diagonally opposite vertices marking the intersection of these arcs with an angle between $\theta$ and $\pi$. We then choose the closest vertex to get a distance $\leq 1/\sqrt2$. In case $P$ is at the intersection of the two almond-shaped regions (in the ``blown square''), we may have to choose one region where one of the vertices is at distance $\leq 1/\sqrt2$, but this is always possible, since the dashed and dotted disks cover everything. Finally, if $P$ does not belong to the union of the two almond-shaped regions, then it lies inside one of the smaller plain disks, where its angle between two appropriate consecutive vertices will also be between $\theta$ and $\pi$. Furthermore, by choosing the closest vertex $V_1$ to $P$, we have $V_1P < 1/\sqrt2$.  \qed
\end{proof}

It remains to explain how we can calculate $\theta$, or rather its value on the usual trigonometric functions $\sin\theta$ and $\cos\theta$ (which is what we really need later), since we can show that they are algebraic numbers expressible by radicals, but $\theta/\pi\notin \mathbb{Q}$.

We concentrate on finding the cartesian coordinates of $R=(1/2,1-u/2)$, appearing in Figure~\ref{F:1}, supposing the vertices are the origin, $(1,0), (1,1)$ and $(0,1)$. Our aim is then to find $u\in(0,1)$. A look at Figure~\ref{F:2} shows the disposition of the angles, so that $u= \cot(\theta/2)$ and $2-u = \cot(3\theta/2-\pi) = \cot(3\theta/2)$. The triplication formulas for the cotangent then show that $u$ satisfies the equation
$$
u+\frac{3u-u^3}{1-3u^2} =2 \quad\Longleftrightarrow\quad
2u^3-3u^2-2u+1=0 \enspace.
$$
Solving it yields that the root we are looking for is
$$
u= \frac{{\left( {\sqrt{ 3 } i} + 1 \right) {\left( {{12 \sqrt{ 237 }} i} - 54 \right)}^{\frac{1}{3}} }}{{12 {2}^{\frac{1}{3}} }} + \frac{{{7 {2}^{\frac{1}{3}} } \left( 1 - {\sqrt{ 3 } i} \right)}}{{4 {\left( {{12 \sqrt{ 237 }} i} - 54 \right)}^{\frac{1}{3}} }} + \frac{1}{2} \approx 0.3554157
$$
where the determination of the cube root is the one in the first quadrant.

\begin{figure}
\begin{center}
\psset{unit=2.5}
\pspicture(-1,0)(3,2)
\pspolygon(0,0)(2,0)(2,2)(0,2)
\psdots(1,1.644584)
\uput{7pt}[120](1,1.644584){$R$}
\psline[linestyle=dotted](1,0)(1,2)
\psline[linestyle=dotted](0,1.644584)(2,1.644584)
\psline(1,1.644584)(2,2)
\psline(0,2)(1,1.644584)
\psline(1,1.644584)(2,0)
\radians
\psarc(1,1.644584){0.5}{-1.5708}{-1.02447}
\psarc(1,1.644584){0.2}{0.34149}{1.5708}
\psarc(1,1.644584){0.2}{2.8}{-1.02447}
\uput{0.7}[-1.3](1,1.644584){$\frac{3\theta}2 -\pi$}
\uput{0.25}[0.7](1,1.644584){$\theta/2$}
\uput{0.25}[-2.5](1,1.644584){$\theta$}
\endpspicture
\end{center}
\caption{}
\label{F:2}
\end{figure}

\begin{remark}
One can see that $\theta/\pi\notin\mathbb{Q}$ in the following way. 
$$
\cot(\theta/2)= i\, \frac{e^{i\theta/2} + e^{-i\theta/2}}{e^{i\theta/2}-e^{-i\theta/2}}
$$
and supposing by absurd that $\theta/\pi\in\mathbb{Q}$ we would have that $e^{i\theta}$ is a root of unity. The preceding equality shows that then $\cot(\theta/2)$ belongs to a cyclotomic extension of $\mathbb{Q}$, whose Galois group is abelian. But we have seen that the irreducible polynomial of $\cot(\theta/2)$ is $2x^3-3x^2-2x+1$, with discriminant $316$, not a rational square. Therefore its Galois group is the nonabelian $S_3$, contradiction.
\end{remark}


\begin{remark}
When applying Lemma~\ref{L:geo}, it is essential that we be able to choose from the set of all vertices of the lattice square which ones are the adequate $V_1$ and $V_2$. Since the only excluded quotient $q_j$ is zero, it means that we must be careful to avoid all four squares which have the origin as a vertex. But this follows from the fact that at all steps $j\geq 0$ we always have $|r_j/r_{j+1}| \geq \sqrt 2$.
\end{remark}

Define $\Theta = \arctan2 - \pi/3$ and $A=1/\sin\Theta= 2\sqrt5\,(8+5\sqrt3)/\sqrt{13+4\sqrt3} \approx 16.6902$.
In the following analysis of the EGEA, it will be useful to make the following distinction between indices.

\begin{definition}[Good and bad $j$'s]
A step $j\geq 0$ of the EGEA will be called \emph{bad} if, during the $j-1$-th step, among all four choices of $q_{j}$ as a vertex of the lattice square containing $r_{j-1}/r_j$  (and consequent choice of $r_{j+1}$ and $s_{j+1}$, noting that for the purpose of this definition we do not require that $|r_{j+1}|<|r_j|$), we always have $s_j s_{j+1}r_{j+1}\neq 0$ and 
$$
\left| \arg\left( \frac{ s_{j+1}r_j}{s_j r_{j+1}} \right) \right | < \Theta \enspace.
$$
Otherwise $j$ is called \emph{good}.
\end{definition}

\begin{remark}
Note that $j=0$ and $j=1$ are always good, since $s_1=0$.
\end{remark}

\begin{lemma}[Use of good $j$'s]
\label{L:good}
If $j$ is good then for some choice of $r_{j+1}'$ (and relative $s_{j+1}'$) we have
$$
\left |s_{j+1}'r_j - s_jr_{j+1}' \right| \geq \sin\Theta \max(|s_{j+1}' r_j|, |s_j r_{j+1}'|) 
$$
and, therefore,  if we choose a $r_{j+1}$ with smallest modulus, then
$$
\max(|s_{j}r_{j+1}|, |s_{j+1}r_j|)\leq (A+1) |\nu|
$$
\end{lemma}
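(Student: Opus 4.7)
My plan is to leverage the invariant identity
$$
s_{j+1}'r_j - s_j r_{j+1}' = (-1)^{j+1}\nu,
$$
which is the Gaussian analogue of property~4 of Lemma~\ref{L:gcdprop}. A direct telescoping computation shows that its left-hand side depends only on $r_{j-1}, r_j, s_{j-1}, s_j$ (all fixed before step $j-1$) and not on the quotient $q_j'$ actually chosen, so the identity holds for every admissible choice of $q_j'$, hence every admissible $(r_{j+1}', s_{j+1}')$.

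I would first establish the displayed inequality for a suitable $r_{j+1}'$. By the definition of a good $j$, some choice of $q_j'$ yields either (i) $s_j s_{j+1}' r_{j+1}' = 0$, or (ii) $\bigl|\arg(s_{j+1}'r_j / s_j r_{j+1}')\bigr|\geq\Theta$. In case (i) the identity forces the nonzero one of $s_{j+1}'r_j, s_j r_{j+1}'$ to have modulus exactly $|\nu|$, making the inequality trivial since $\sin\Theta\leq 1$. In case (ii) I would set $z_1 = s_{j+1}'r_j$, $z_2 = s_j r_{j+1}'$ (both nonzero), observe $|z_1 - z_2| = |\nu|$, and study the triangle $0,z_1,z_2$, whose angle at the origin is some $\alpha\in[\Theta,\pi]$. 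To show $\max(|z_1|,|z_2|)\leq|\nu|/\sin\Theta = A|\nu|$, I would split on $\alpha$: for $\alpha \geq \pi/2$, the law of cosines with $\cos\alpha \leq 0$ gives $|z_i|^2\leq|z_1 - z_2|^2 = |\nu|^2$; for $\Theta \leq \alpha < \pi/2$, the sine rule combined with the monotonicity of $\sin$ on $[0,\pi/2]$ gives $|z_i|\leq|\nu|/\sin\alpha \leq |\nu|/\sin\Theta$.

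For the ``therefore'' clause, I would pass from $r_{j+1}'$ to the algorithm's actual output $r_{j+1}$, chosen with smallest modulus. Since $|r_{j+1}|\leq|r_{j+1}'|$ and $s_j$ is unaffected by the choice at step $j-1$, the bound $|s_j r_{j+1}|\leq|s_j r_{j+1}'|\leq A|\nu|$ is immediate. The invariant identity still holds at $(r_{j+1}, s_{j+1})$, so $s_{j+1}r_j - s_j r_{j+1} = \pm\nu$, and the triangle inequality then yields $|s_{j+1}r_j| \leq |\nu| + |s_j r_{j+1}| \leq (A+1)|\nu|$, which together with the previous bound delivers the claimed maximum.

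The main obstacle I anticipate is the geometric step in case (ii): the pure sine-rule estimate degenerates as $\alpha \to \pi$, and the key observation is that precisely in that regime the law of cosines already yields the \emph{sharper} bound $|z_i|\leq|\nu|$. Splitting on $\alpha \gtrless \pi/2$ unifies the two regimes under the single bound $|\nu|/\sin\Theta$. The remaining points --- that the telescoping identity is insensitive to the choice of quotient and that switching to the smallest-modulus remainder only tightens the bound on $|s_j r_{j+1}|$ --- are routine bookkeeping.
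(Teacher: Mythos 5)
Your proposal is correct and follows essentially the same route as the paper: reduce to the case where $s_j s_{j+1}' r_{j+1}'\neq 0$, use the choice-independent identity $s_{j+1}'r_j - s_j r_{j+1}' = \pm\nu$ together with the fact that a good $j$ supplies a choice whose argument ratio is at least $\Theta$ in modulus, and then pass to the smallest-modulus remainder via $|r_{j+1}|\leq|r_{j+1}'|$ and the triangle inequality to get $(A+1)|\nu|$. The only cosmetic difference is that you prove the key geometric estimate (angle $\geq\Theta$ forces $|z_1-z_2|\geq\sin\Theta\max(|z_1|,|z_2|)$) by the law of cosines and the sine rule split at $\alpha=\pi/2$, whereas the paper minimizes $|1-\zeta|^2$ as a quadratic in $|\zeta|$ applied to $\zeta$ and $1/\zeta$; these are the same inequality in different dress.
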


\begin{proof}
Notice that the result holds trivially if $s_j s_{j+1}'r_{j+1}'=0$. Otherwise, this is a straightforward application of a general inequality about complex numbers that we can express as follows: let $\zeta \in\mathbb{C}^*$ with $\pi\geq |\arg(\zeta)|\geq\Theta$. We claim that under these conditions,
$|1-\zeta| \geq  \sin\Theta $. Indeed, writing $\zeta = r e^{i\psi}$ with $\Theta \leq \psi\leq \pi$ we have
$$
|1-\zeta|^2 = \left(1-re^{i\psi}\right) \left(1-re^{-i\psi}\right) = 1- 2r \cos\psi + r^2\enspace.
$$
First note that we can suppose $\psi\leq \pi/2$, otherwise clearly $|1-\zeta|\geq 1$. The last expression in $r$, when viewed as a quadratic polynomial has minimum (over $\mathbb{R}$) equal to
$-\Delta/4 = -(4\cos^2\psi -4)/4 = \sin^2\psi \geq \sin^2\Theta$. Therefore $|1-\zeta|\geq \sin\Theta$ thereby proving our claim. The first part of the lemma will follow by applying the claim to $\zeta = s_{j+1}'r_j/s_jr_{j+1}'$ and $\zeta=s_jr_{j+1}'/s_{j+1}'r_j$ successively.

The second part follows from 
$$
|s_{j}r_{j+1}| \leq |s_{j} r_{j+1}'| \leq A \left |s_{j+1}'r_j - s_jr_{j+1}' \right| = A |\nu|
$$
and therefore
$$|s_{j+1}r_j | = |s_{j+1} r_j - s_j r_{j+1} + s_j r_{j+1}| \leq |s_{j+1} r_j - s_j r_{j+1} | + |s_j r_{j+1}|
\leq |\nu| + A |\nu|$$ 
\qed
\end{proof}

\begin{remark}
\label{R:technical}
We have seen in the course of the proof the preceding lemma the following fact: if $\zeta \in\mathbb{C}^*$ with $\pi\geq |\arg(\zeta)|\geq\psi$, then $|1-\zeta| \geq  \sin\psi $. This is equivalent to the following assertion (set $\zeta=1-\xi$), used in the proof of the next lemma: if $|\xi|< \sin\psi$, then $|\arg(1-\xi)| < \psi$.
\end{remark}

The next result is crucial in controlling what happens when things go ``uncontrolled''. Its proof is rather elaborate.

\begin{lemma}[Bad-$j$ behaviour of $s_j$]
\label{L:bad}
If $j$ is bad, then
$$
 |s_{j+1}| \leq 2\sqrt2 \, |s_{j-1}| \quad\text{and} \quad |s_j|  \leq  |s_{j-1}| \quad \enspace.
$$
\end{lemma}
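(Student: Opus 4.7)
\textbf{Step 1 (algebraic reformulation).} For any choice of $q_j$ the identity $s_{j+1}r_j-s_jr_{j+1}=(-1)^{j+1}\nu$ (property 4 of Lemma~\ref{L:gcdprop}, which still holds for the EGEA) is valid. Writing $\alpha=r_{j-1}/r_j$ and $\beta=s_{j-1}/s_j$ (well-defined since bad $j$ forces $s_j\neq 0$) and dividing by $s_jr_j$, I use $s_{j+1}/s_j=\beta-q_j$ and $r_{j+1}/r_j=\alpha-q_j$ to obtain a quantity independent of $q_j$,
$$
\beta-\alpha=(-1)^{j+1}\nu/(s_jr_j)=:\zeta.
$$
The bad condition then becomes: for every vertex $Q$ of the lattice square containing $\alpha$,
$$
\Bigl|\arg\bigl(1+\tfrac{\zeta}{\alpha-Q}\bigr)\Bigr|<\Theta.
$$

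\textbf{Step 2 (reduction to a bound on $|\zeta|$).} Since $|s_j|/|s_{j-1}|=1/|\beta|$ and $|s_{j+1}|/|s_{j-1}|=|\beta-q_j^*|/|\beta|$, the EGEA invariant $|\alpha|\geq\sqrt{2}$ (since bad steps satisfy $j\geq 2$) and $|\alpha-q_j^*|\leq 1/\sqrt{2}$ give
$$
|\beta|\geq\sqrt{2}-|\zeta|,\qquad |\beta-q_j^*|\leq|\zeta|+1/\sqrt{2}.
$$
Hence establishing $|\zeta|\leq\sqrt{2}-1$ yields both conclusions: $|\beta|\geq 1$ (so $|s_j|\leq|s_{j-1}|$) and $|s_{j+1}|/|s_{j-1}|\leq(1/\sqrt{2}+\sqrt{2}-1)/1=3\sqrt{2}/2-1<2\sqrt{2}$.

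\textbf{Step 3 (geometric bound on $|\zeta|$).} The bad condition restricts $\zeta$ to the intersection of four ``cones'' $(\alpha-Q)\{z:|\arg(1+z)|<\Theta\}$, each of half-angle $\Theta$, apex $Q-\alpha$, and axis passing through $0$. Because the four displacements $\alpha-Q$ have arguments in the four distinct quadrants, this intersection is bounded. To quantify, I would pair diagonally opposite vertices $Q,Q'$ and use the identity
$$
\Bigl(1+\tfrac{\zeta}{\alpha-Q}\Bigr)-\Bigl(1+\tfrac{\zeta}{\alpha-Q'}\Bigr)=\frac{\zeta(Q'-Q)}{(\alpha-Q)(\alpha-Q')}.
$$
Both terms on the left have argument in $(-\Theta,\Theta)$, so their imaginary parts are controlled by $\tan\Theta$ times their real parts. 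Applying this to both diagonal pairs $(Q_1,Q_3)$ and $(Q_2,Q_4)$, whose connecting vectors $Q_3-Q_1$ and $Q_4-Q_2$ are perpendicular (of common modulus $\sqrt{2}$), should yield two independent trigonometric inequalities that bound $\Re\zeta$ and $\Im\zeta$ in a suitable rotated frame, hence $|\zeta|\leq C\sin\Theta$ for an absolute constant $C$. Since $\sin\Theta$ is small, this is comfortably below $\sqrt{2}-1$.

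\textbf{Main obstacle.} The hard part is Step~3. The obstacle is the asymmetry of $|\alpha-Q_k|$, which varies between $\leq 1/\sqrt{2}$ (closest vertex) and $\leq\sqrt{2}$ (farthest), so the ``cones'' have very different widths; and the ``imaginary part'' argument indicated above needs a preliminary control on the real parts $\Re(1+\zeta/(\alpha-Q_k))$ before it can be turned into a modulus bound on $\zeta$. Such a preliminary step likely requires either an argument in the spirit of Lemma~\ref{L:geo} applied in a rescaled frame, or a direct trigonometric case analysis that exploits the specific value $\Theta=\arctan 2-\pi/3$ (which, as in Lemma~\ref{L:geo}, is precisely what makes the closing estimates work out).
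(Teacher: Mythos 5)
Your Steps 1 and 2 are a clean and correct reformulation: setting $\alpha=r_{j-1}/r_j$, $\beta=s_{j-1}/s_j$ and $\zeta=\beta-\alpha=(-1)^{j+1}\nu/(s_jr_j)$, the bad condition becomes $\bigl|\arg\bigl(1+\zeta/(\alpha-Q)\bigr)\bigr|<\Theta$ for all four vertices $Q$, and since $|\alpha|\geq\sqrt2$ for $j\geq 2$ (bad steps), the single bound $|\zeta|\leq\sqrt2-1$ would indeed yield both conclusions (in fact the sharper $|s_{j+1}|\leq(3/\sqrt2-1)|s_{j-1}|$). This is a genuinely different target from the paper's: the paper proves only $|\beta|\geq 1$ (not a bound on $|\zeta|$), and obtains the first inequality by a separate step-back argument using $s_{j+1}=s_{j+1}'+(q_j'-q_j)s_j$ once $|s_j|\leq|s_{j-1}|$ is known.

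The gap is exactly where you flag it: Step~3 is never carried out, and the obstacle you name is the real one. The identity $z_Q-z_{Q'}=\zeta(Q'-Q)/((\alpha-Q)(\alpha-Q'))$ with $z_Q=1+\zeta/(\alpha-Q)$ controls only $|\Im(z_Q-z_{Q'})|\leq\tan\Theta\,(\Re z_Q+\Re z_{Q'})$, which is vacuous without an a priori upper bound on the real parts; those real parts equal $1+\Re(\zeta/(\alpha-Q))$ and are a priori unbounded. Turning this into $|\zeta|\leq C\sin\Theta$ needs the analogue of the paper's full case analysis (general position of $\alpha$ inside the square, plus the exceptional squares with a unit-modulus vertex, where the spread of the $|\alpha-Q_k|$ degrades the estimate). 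The paper avoids bounding $\zeta$ altogether: it chooses the privileged pair $V_1,V_2$ from Lemma~\ref{L:geo}, shows the bad condition forces $|\arg(s_{j+1}/s_{j+1}')|>\theta-2\Theta>2\pi/3$, and then derives a contradiction from the hypothesis $|s_j|>|s_{j-1}|$ by bounding $|\arg(s_{j+1}/s_{j+1}')|\leq\pi/4+\pi/4+\pi/6=2\pi/3$ — a chain of argument bounds on $(q_js_j-s_{j-1})/(q_js_j)$, $q_j/q_j'$ and $(q_j's_j)/(q_j's_j-s_{j-1})$ — with five ad hoc regions when $q_j$ or $q_j'$ has modulus one. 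Your plan is plausible but, as stated, is an outline whose hard part remains to be done, and nothing suggests it would be shorter than the paper's route; the case analysis is not escaped, only relocated.
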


\begin{proof}
We first suppose that the point $P$ of affix $r_{j-1}/r_j$ does not belong to an exceptional lattice square. Let $V_1$ and $V_2$ as in Lemma~\ref{L:geo} of affixes respectively $q_{j}$ and $q_j'$. Upon defining
$r_{j+1}'= r_{j-1} - q_j' r_j$, since $r_{j+1}= r_{j-1} - q_j r_j$, Lemma~\ref{L:geo} states that $\pi\geq
\bigl |\arg\bigl( (q_j - r_{j-1}/r_j)/(q_j' - r_{j-1}/r_j) \bigr) \bigr| =
|\arg(r_{j+1}/r_{j+1}')| \geq \theta$. By definition of  ``bad'' we have, denoting $s_{j+1}' = s_{j-1} - q_j' s_j$,
$$
\left| \arg\left( \frac{ s_{j+1}r_j}{s_j r_{j+1}} \right) \right | < \Theta \quad\text{and}\quad
\left| \arg\left( \frac{ s_{j+1}'r_j}{s_j r_{j+1}'} \right) \right | < \Theta \enspace,
$$
and this yields 
\begin{align*}
\left| \arg\left(\frac{s_{j+1} r_{j+1}'}{s_{j+1}'r_{j+1}}\right)\right| &=
 \left| \arg\left( \frac{ s_{j+1}r_j}{s_j r_{j+1}} \right) +   \arg\left( \frac{s_j r_{j+1}'}{ s_{j+1}'r_j} \right) \right | \\
&\leq \left| \arg\left( \frac{ s_{j+1}r_j}{s_j r_{j+1}} \right) \right| +  \left| \arg\left( \frac{s_j r_{j+1}'}{ s_{j+1}'r_j} \right) \right| < 2\Theta \enspace.
\end{align*}
We deduce 
\begin{align*}
\left| \arg\left(\frac{s_{j+1} r_{j+1}'}{s_{j+1}'r_{j+1}}\right) + \arg \left( \frac{r_{j+1}}{r_{j+1}'}\right)\right| 
&\geq \left| \,  \left| \arg \left( \frac{r_{j+1}}{r_{j+1}'}\right)\right|-  \left |\arg\left(\frac{s_{j+1} r_{j+1}'}{s_{j+1}'r_{j+1}}\right) \right| \,\right| \\
&> \theta - 2\Theta > \frac{2\pi}3 \enspace,
\end{align*}
while on the other hand
$$
\left| \arg\left(\frac{s_{j+1} r_{j+1}'}{s_{j+1}'r_{j+1}}\right)\right| + \left|\arg \left( \frac{r_{j+1}}{r_{j+1}'}\right)\right| 
<  2\Theta+\pi< \theta-\frac{2\pi}3 +\pi < \frac{4\pi}3 
$$
which together imply
\begin{equation}
\label{E:3pi4}
\left| \arg \left( \frac{s_{j+1}}{s_{j+1}'}\right) \right| > \frac{2\pi}3 \enspace.
\end{equation}

Now assume that $|s_{j}| > |s_{j-1}|$. Then $|q_j s_j|>\sqrt2\, |s_{j-1}|$ and $|q_j' s_j| > \sqrt2\, |s_{j-1}|$, since the quotients $q_j, q_j'$ Gaussian integers of modulus different from zero or one. Furthermore, since there is at most one Gaussian integer of modulus one in a lattice square, we have that either  $|q_j s_j|>2 |s_{j-1}|$ or $|q_j' s_j|>2 |s_{j-1}|$. Therefore, by Remark~\ref{R:technical},
$$
\left| \arg \left(\frac{q_js_j- s_{j-1}}{q_j s_j}\right)\right |
= \left| \arg \left(1- \frac{s_{j-1}}{q_j s_j}\right)\right | \leq \frac\pi4
$$
and similarly
$$
\left| \arg \left(\frac{q_j's_j- s_{j-1}}{q_j' s_j}\right)\right |
\leq \frac\pi4 \enspace,
$$
with at least one of them being $\leq \pi/6$.
We then get, using that $|\arg(q_j/q_j')|\leq \pi/4$,
\begin{align*}
\left| \arg \left( \frac{s_{j+1}}{s_{j+1}'}\right) \right| &= 
\left| \arg \left(\frac{q_js_j- s_{j-1}}{q_j s_j}\right) + 
\arg\left( \frac{q_j}{q_j'}\right) + \arg \left(\frac{q_j' s_j}{q_j's_j- s_{j-1}}\right)
\right | \\
&\leq \frac{\pi}4 + \frac\pi4 + \frac\pi6 = \frac{2\pi}3
\end{align*}
contradicting~\eqref{E:3pi4}. 

\noindent\textbf{Exceptional Choices of $V_1$ and $V_2$}: \newline
We now discuss the exceptional cases when $q_{j}$ or $q_j'=\pm1,\pm i$, which need to be handled \textit{ad hoc}. By symmetry (without loss of generality), we place ourselves in the case when $r_{j-1}/r_{j}$ lies in the lattice square of vertices $i, 1+i, 1+2i, 2i$. It then belongs to one of the five regions labeled $1$ to $5$ on Figure~\ref{F:3}. Note that the open grey region is off-limits, since $|r_{j-1}/r_{j}| \geq \sqrt 2$. Each of these regions contains two lattice points, which as before we will denote $V_1$ for the one closest to the point $P$ of affix $r_{j-1}/r_{j}$ and $V_2$ for the other one (in case $P$ lies on the boundary between two or more zones, their distinction is immaterial). Note that $V_1$ is closest to $P$ among \emph{all} four vertices.

\begin{figure}
\begin{center}
\psset{unit=2.5}
\pspicture(-1,0)(3,2)
\pspolygon[fillstyle=vlines](0,0)(0.5,1)(0,2)
\pspolygon[fillstyle=vlines, hatchangle=0](0,2)(1,1.5)(2,2)
\pspolygon[fillstyle=hlines](2,2)(1.5,1)(2,0)
\pspolygon[fillcolor=gray,fillstyle=solid](0,0)(1,0.5)(2,0)
\pspolygon[linestyle=none,fillstyle=solid, fillcolor=red](0,0)(1,0.5)(1,1.5)(2,2)(1.5,1)(0.5,1)
\pspolygon[linestyle=none,fillstyle=solid, fillcolor=yellow](2,0)(1,0.5)(1,1.5)(0,2)(0.5,1)(1.5,1)
\uput{0.05}[-135](0,0){$i$}
\uput{0.05}[-45](2,0){$1+i$}
\uput{0.05}[45](2,2){$1+2i$}
\uput{0.05}[135](0,2){$2i$}
\psdots(0,0)(2,0)(2,2)(0,2)
\uput*[0](0.1,1){1}
\uput*[180](1.9,1){3}
\uput*[-90](1,1.9){2}
\uput{0.5}[135](1,1){5}
\uput{0.5}[-45](1,1){5}
\uput{0.5}[45](1,1){4}
\uput{0.5}[-135](1,1){4}
\psline[linestyle=dotted](1,0)(1,2)
\psline[linestyle=dotted](0,1)(2,1)
\endpspicture
\end{center}
\caption{}
\label{F:3}
\end{figure}

\noindent\textbf{Region} 1 (delimited by a triangle of vertices $i, 1/4+3i/2, 2i$):
In this case, $ |\arg(r_{j+1}/r_{j+1}')| = \widehat{V_1PV_2} \geq 2\arctan 2$. Supposing to fix notations that $q_j=i$ and $q_j'=2i$ we have, assuming that $|s_j|> |s_{j-1}|$ and using Remark~\ref{R:technical}
$$
\left| \arg \left(\frac{q_js_j- s_{j-1}}{q_j s_j}\right)\right |
\leq \frac\pi2 \enspace, \quad \left| \arg \left(\frac{q_j's_j- s_{j-1}}{q_j' s_j}\right)\right |
\leq \frac\pi6 \enspace.
$$
On the other hand, a reasoning similar to the one leading to~\eqref{E:3pi4} with the value $2\arctan 2$ instead of $\theta$ will show that again $ |\arg(s_{j+1}/s_{j+1}')| > 2\pi/3$, which leads to a contradiction since
\begin{align*}
 \left| \arg \left( \frac{s_{j+1}}{s_{j+1}'}\right) \right| &= 
\left| \arg \left(\frac{q_js_j- s_{j-1}}{q_j s_j}\right) + 
\arg\left( \frac{q_j}{q_j'}\right) + \arg \left(\frac{q_j' s_j}{q_j's_j- s_{j-1}}\right)
\right | \\
&\leq \frac{\pi}2 + 0 + \frac\pi6 = \frac{2\pi}3 \enspace.
\end{align*}
The other four cases are treated similarly and we briefly outline them.

\noindent\textbf{Region} 2 (delimited by a triangle of vertices $2i, 1+7i/4, 1+2i$):
Here $ |\arg(r_{j+1}/r_{j+1}')| \geq 2\arctan 2$. Letting $q_j=2i, q_j'=1+2i$ one can show, assuming that $|s_j|> |s_{j-1}|$, that
\begin{align*}
\frac{2\pi}3 < \left| \arg \left( \frac{s_{j+1}}{s_{j+1}'}\right) \right| &= 
\left| \arg \left(\frac{q_js_j- s_{j-1}}{q_j s_j}\right) + 
\arg\left( \frac{q_j}{q_j'}\right) + \arg \left(\frac{q_j' s_j}{q_j's_j- s_{j-1}}\right)
\right | \\
&\leq \frac{\pi}6 +\left( \frac\pi2 - \arctan2\right) +  \left( \frac\pi2 - \arctan2\right) <\frac\pi2< \frac{2\pi}3 \enspace,
\end{align*}
contradiction.

\noindent\textbf{Region} 3 (delimited by a triangle of vertices $1+2i, 3/4+i/2, 1+i$):
Here $ |\arg(r_{j+1}/r_{j+1}')| \geq 2\arctan 2$. Letting $q_j=1+i, q_j'=1+2i$ one can show, assuming that $|s_j|> |s_{j-1}|$, that
\begin{align*}
\frac{2\pi}3 < \left| \arg \left( \frac{s_{j+1}}{s_{j+1}'}\right) \right| &= 
\left| \arg \left(\frac{q_js_j- s_{j-1}}{q_j s_j}\right) + 
\arg\left( \frac{q_j}{q_j'}\right) + \arg \left(\frac{q_j' s_j}{q_j's_j- s_{j-1}}\right)
\right | \\
&\leq  \frac{\pi}4 + \arctan(1/3) +\left( \frac\pi2 - \arctan2\right) 
=\frac\pi2< \frac{2\pi}3 \enspace,
\end{align*}
contradiction.

\noindent\textbf{Region} 4 (the red zone):
Here $ |\arg(r_{j+1}/r_{j+1}')| \geq \pi-\arctan2 + \arctan(2/3)$. Letting $q_j=i, q_j'=1+2i$ one can show, assuming that $|s_j|> |s_{j-1}|$, that
\begin{align*}
\frac{5\pi}3-3\arctan(2) + \arctan(2/3) &< \left| \arg \left( \frac{s_{j+1}}{s_{j+1}'}\right) \right| \\
&= 
\left| \arg \left(\frac{q_js_j- s_{j-1}}{q_j s_j}\right) + 
\arg\left( \frac{q_j}{q_j'}\right) + \arg \left(\frac{q_j' s_j}{q_j's_j- s_{j-1}}\right)
\right | \\
&\leq \frac{\pi}2 + \left( \frac\pi2 - \arctan2\right) +\left( \frac\pi2 - \arctan2\right) \\
&<\frac{5\pi}3-3\arctan(2) + \arctan(2/3)\enspace,
\end{align*}
contradiction.

\noindent\textbf{Region} 5 (the yellow zone):
Here $ |\arg(r_{j+1}/r_{j+1}')| \geq \pi - \arctan(4/7)$. Letting $q_j=1+i$, $q_j'= 2i$ one can show, assuming that $|s_j|> |s_{j-1}|$, that
\begin{align*}
\frac{5\pi}3-3\arctan(2) + \arctan(2/3) &< \left| \arg \left( \frac{s_{j+1}}{s_{j+1}'}\right) \right| \\
&= 
\left| \arg \left(\frac{q_js_j- s_{j-1}}{q_j s_j}\right) + 
\arg\left( \frac{q_j}{q_j'}\right) + \arg \left(\frac{q_j' s_j}{q_j's_j- s_{j-1}}\right)
\right | \\
&\leq \frac{\pi}4 + \frac\pi4 +\frac\pi6= \frac{2\pi}3 \\
&< \frac{5\pi}3-3\arctan(2) + \arctan(2/3) \enspace,
\end{align*}
contradiction.

We have thus proved that in any case $|s_j|\leq |s_{j-1}|$. To show the first part of the lemma, we proceed similarly, although there is a slight difference. We assume at first that $|s_{j+1}|>2\, |s_{j-1}|$ \emph{and} $|s_{j+1}' |> \sqrt2\, |s_{j-1}|$.
Then, by Remark~\ref{R:technical},
$$
\left| \arg \left( \frac{s_{j+1} - s_{j-1}}{s_{j+1}} \right) \right | =
\left| \arg \left( 1 -\frac{s_{j-1}}{s_{j+1}} \right) \right | \leq  \frac\pi6 
$$
and
$$
\left| \arg \left( \frac{s_{j+1}' - s_{j-1}}{s_{j+1}'} \right) \right | 
\leq  \frac\pi4 \enspace.
$$ 
Proceeding as previously,
\begin{align*}
\left| \arg \left( \frac{s_{j+1}}{s_{j+1}'}\right) \right| &= 
\left| \arg \left( \frac{s_{j+1}}{s_{j+1} - s_{j-1}} \right) 
+ \arg\left( \frac{q_j}{q_j'}\right) +
 \arg \left( \frac{s_{j+1}' - s_{j-1}}{s_{j+1}'} \right) 
\right | \\
&\leq \frac\pi4 + \frac\pi4 + \frac\pi6 = \frac{2\pi}3
\end{align*}
again contradicting~\eqref{E:3pi4}, which also holds in the exceptional cases, as we have just seen.
Therefore $|s_{j+1}| \leq 2\, |s_{j-1}|$ \emph{or} $|s_{j+1}' | \leq \sqrt2\, |s_{j-1}|$ (or both). In the first case, we are done. Otherwise,
since $s_{j+1}+ q_j s_j = s_{j+1}' + q_j' s_j=s_{j-1}$, we derive
$$
|s_{j+1}|\leq |s_{j+1}'| + |q_j - q_j'| |s_j| \leq \sqrt2\, |s_{j-1}| + \sqrt 2 |s_{j-1}| = 2\sqrt2\, |s_{j-1}|\enspace,
$$
by the already proved second part of the lemma and the fact that $q_j, q_j'$ correspond to two vertices of the same lattice square, so that $|q_j -q_j'| \leq \sqrt 2$.
\qed
\end{proof}

\begin{lemma}[Lower bound on generic vectors of $\ker F$]
\label{L:genlow}
For any nonzero $(z_1,z_2)\in\ker F$ we have
$$
\max(|z_1|, |z_2|) \geq \frac{\sqrt{|\nu|}}{\sqrt{1+|r|+s}} \enspace.
$$
In particular, for any $j\geq 0$ we have
$$
\max(|r_j|, |s_j|) \geq \frac{\sqrt{|\nu|}}{\sqrt{1+|r|+s}} \enspace.
$$
\end{lemma}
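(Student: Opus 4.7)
The plan is to control a quadratic norm form of $z_1+\Phi z_2$, viewed in $\Z[i]$, and to exploit its divisibility by $\nu$ to force a lower bound on $\max(|z_1|,|z_2|)$, in direct analogy with the lower-bound step (Lemma~\ref{L:genlownumfield}) of the four-dimensional analysis.

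First I would introduce
$$N(z_1, z_2) := z_1^{2} - r\, z_1 z_2 + s\, z_2^{2} \in \Z[i].$$
Since $\lambda$ and $\lambda':=-r-\lambda$ are the two roots of $X^{2}+rX+s$ modulo $n$, Vieta gives $\lambda+\lambda'\equiv -r$ and $\lambda\lambda'\equiv s\pmod n$. As $\nu\mid n$ in $\Z[i]$, the same congruences hold modulo $\nu$, yielding
$$N(z_1,z_2)\equiv (z_1+\lambda z_2)(z_1+\lambda' z_2)\pmod\nu.$$
For $(z_1,z_2)\in\ker F$ the first factor on the right is divisible by $\nu$, hence $\nu\mid N(z_1,z_2)$ in $\Z[i]$.

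The second ingredient is the non-vanishing of $N$. If $z_2=0$ this is clear from $N=z_1^{2}$; otherwise $N=0$ would force $z_1/z_2\in\mathbb{Q}(i)$ to be a root of $X^{2}-rX+s$, hence an element of $\mathbb{Q}(\Phi)\cap\mathbb{Q}(i)=\mathbb{Q}$ by the disjointness hypothesis of Section~\ref{S:4GLV}; but $\Delta=r^{2}-4s<0$ for every GLV endomorphism considered here, so $X^{2}-rX+s$ admits no rational root. Therefore $N(z_1,z_2)$ is a nonzero Gaussian-integer multiple of $\nu$, whence $|N(z_1,z_2)|\geq |\nu|$. Combining with the triangle-inequality estimate
$$|N(z_1,z_2)|\leq |z_1|^{2} + |r|\, |z_1|\, |z_2| + s\, |z_2|^{2} \leq (1+|r|+s)\, M^{2},$$
where $M=\max(|z_1|,|z_2|)$, yields the first bound $M\geq \sqrt{|\nu|}/\sqrt{1+|r|+s}$.

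For the second claim, the extended-Euclidean Bezout relation built into Algorithm~\ref{A:2} furnishes $t_j\in\Z[i]$ with $r_j=s_j\lambda+t_j\nu$ (using $\nu\mid n$ to absorb the possible shift $r_0=\lambda+n$), so $F(r_j,-s_j)=r_j-\lambda s_j=t_j\nu\equiv 0\pmod\nu$ places $(r_j,-s_j)\in\ker F$; the first claim then gives the stated lower bound on $\max(|r_j|,|s_j|)$. The only delicate point in the whole argument is the non-vanishing of $N$, which is precisely where the biquadratic hypothesis $\mathbb{Q}(\Phi)\cap\mathbb{Q}(\Psi)=\mathbb{Q}$ (together with the negativity of $\Delta$) is used.
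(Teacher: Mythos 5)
Your proof is correct and follows the same route as the paper's: factor the quadratic form $z_1^2 - rz_1z_2 + sz_2^2$ via the two roots of $X^2+rX+s$ modulo $\nu$, deduce $\nu$-divisibility from $(z_1,z_2)\in\ker F$, invoke the linear disjointness of $\mathbb{Q}(\Phi)$ and $\mathbb{Q}(i)$ to get non-vanishing, and close with the triangle inequality; the second claim is handled identically via the Bézout relation $r_j = r_0 s_j + \nu t_j$. If anything you spell out the non-vanishing step (and the role of $\Delta<0$) a bit more explicitly than the paper does, which compresses it into a one-line appeal to irreducibility over $\mathbb{Q}(i)$.
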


\begin{proof}
This proof uses an argument already appearing in the proof of the original GLV algorithm, see~\cite{SCQ02}, as well as Lemma~\ref{L:genlownumfield}. If $(0,0)\neq(z_1, z_2)\in \ker F$ then $z_1+\lambda z_2 \equiv 0 \pmod\nu$. If $\lambda'$ is the other root of $X^2+rX+s \pmod n$, we get that 
$$
z_1^2 - r z_1z_2 +s z_2^2 \equiv (z_1+\lambda z_2) (z_1 +\lambda' z_2) \equiv 0 \pmod\nu \enspace.
$$
Since $X^2+rX+s$ is irreducible in $\mathbb{Q}(i)$ because the two quadratic fields are linearly disjoint, we therefore have $|z_1^2 - r z_1z_2 +s z_2^2|\geq |\nu|$. On the other hand if 
$$
\max(|z_1|, |z_2|) < \frac{\sqrt{|\nu|}}{\sqrt{1+|r|+s}} \enspace,
$$
then 
$$
|z_1^2 - r z_1z_2 +s z_2^2| \leq |z_1|^2 + |r| |z_1| |z_2| + s |z_2|^2 
< |\nu|\enspace,
$$
a contradiction. To show the second part, it suffices to note that since $r_0 s_j + \nu t_j = r_j$ (where, as mentioned previously, $r_0=\lambda$ or $\lambda+n$), we have that 
$$
0\equiv \nu t_j= r_j - r_0 s_j \equiv r_j - \lambda s_j \pmod\nu
$$
so that $(r_j, -s_j)\in \ker F$ for every $j\geq0$.
\qed
\end{proof}

\begin{proof}[of Theorem~\ref{T:3}]
It remains here to show the improved bound, which brings us to finding four $\mathbb{Q}$-linearly independent vectors of $\ker F$ of rectangle norm bounded by $Cn^{1/4}$.
Define $m\geq 1$ as the index such that 
\begin{equation}
\label{E:r_m+1}
|r_m| \geq \frac{\sqrt{|\nu|}}{\sqrt{1+|r|+s}} \quad \text{and} \quad  |r_{m+1}| < \frac{\sqrt{|\nu|}}{\sqrt{1+|r|+s}}\enspace.
\end{equation}
Let us consider an index $j\leq m$. If it's good, then by Lemma~\ref{L:good} we have $|s_{j+1} r_{j}| \leq (A+1) |\nu|$ and therefore, since $(|r_j|)$ is a decreasing sequence, 
\begin{equation}
\label{E:s_j}
|s_{j+1}| \leq 2\sqrt2(A+1)\sqrt{1+|r|+s} \sqrt{|\nu|}\enspace.
\end{equation}

On the other hand if it's bad, then let $l < j$ be the largest good index less than $j$. By Lemma~\ref{L:bad} and Lemma~\ref{L:good} we have 
\begin{align*}
\frac{|s_{j+1}|}{2\sqrt2}\leq |s_{j-1}| \leq |s_{j-2}| \leq \dots \leq |s_l | &\leq (A+1) \frac{ |\nu|}{|r_{l+1}|}\\
&\leq (A+1) \sqrt{1+|r|+s} \sqrt{|\nu|} \enspace,
\end{align*}
therefore in any case~\eqref{E:s_j} holds. Applying this to $j=m-1$ and $j=m$ we find that
\begin{equation}
\label{E:s}
\max(|s_m|, |s_{m+1}|) \leq 2\sqrt2(A+1) \sqrt{1+|r|+s} \sqrt{|\nu|}\enspace.
\end{equation}
Moreover, using 
\begin{equation}
\label{E:mix_m}
s_{m+1} r_m - s_m r_{m+1} = (-1)^{m+1} \nu
\end{equation}
and from~\eqref{E:r_m+1}, \eqref{E:s} we deduce
$$
|s_{m+1}r_m| \leq {|\nu| + |s_m r_{m+1}| }  \leq |\nu| + 2\sqrt2(A+1) |\nu| \enspace.
$$
In addition, by Lemma~\ref{L:genlow} we must have 
$$
|s_{m+1}| \geq \frac{\sqrt{|\nu|}}{\sqrt{1+|r|+s}}
$$
which therefore implies that 
$$
|r_m| \leq \bigl(2\sqrt2\left(A + 1\right)+1\bigr) \sqrt{1+|r|+s} \sqrt{|\nu|}\enspace.
$$
This last equation, together with~\eqref{E:s} and~\eqref{E:r_m+1}, show that the two vectors $v_1=(r_m, -s_m), v_2=(r_{m+1}, -s_{m+1}) \in \ker F\subset \Z[i]^2$ have rectangle norms bounded by $C \sqrt{|\nu|} = C n^{1/4}$, for $C=  \bigl(2\sqrt2\left(A + 1\right)+1\bigr) \sqrt{1+|r|+s} < 51.5\, \sqrt{1+|r|+s}$ (that these two vectors belong to $\ker F$ was shown in the proof of Lemma~\ref{L:genlow}). 

We can find two more vectors by noticing that~\eqref{E:mix_m} implies that $v_1$ and $v_2$ are $\mathbb{Q}(i)$-linearly independent. Therefore, the vectors $v_1, v_2, v_3=iv_1, v_4=iv_2$ are $\mathbb{Q}$-linearly independent. They all belong to $\ker F$ and have rectangle norms bounded by $C n^{1/4}$. In view of the fact that the Euclidean norm upper-bounds the rectangle norm, the corresponding vectors in $\Z^4$ also have rectangle norms bounded by $C n^{1/4}$, thus concluding the proof of the theorem, since these are exactly the four vectors returned by Algorithm~\ref{A:3}.
\qed
\end{proof}


\begin{remark}
Let us note that since we are in dimension less than 5, {\fontencoding{T5}\selectfont Nguy\~\ecircumflex n} and Stehl\'e~\cite{NS04} have also produced and algorithm which finds vectors of successive minima in $\ker F$ with a running time $O(\log^2 n)$. However it doesn't seem to give an explicit bound on their length applicable to our case.
\end{remark}

\begin{remark}
One may doubt about the pertinence of securing a faster lattice reduction algorithm for the GLV lattice $\ker F$. Indeed, at the present moment, the GLV method has been applied by choosing a fixed curve in the parameters and performing the lattice reduction offline. However, it is quite possible that in the future some new cryptosystem will require an online curve agreement, by counting points over a suitable field and successively performing the lattice reduction. For this, and human tendency of pushing away limitations, we consider that our previous argument, in addition to its formal elegance, may eventually find a useful application.
\end{remark}

\section{Performance Estimates}

In this section, we assess performance of the four-dimensional GLV method in comparison with the traditional and two-dimensional cases. For our analysis, let us consider the curve $E: y^2 = x^3+b$ over a quadratic extension field of large prime characteristic, exploiting a pseudo-Mersenne prime $p$ such that $-1$ is a quadratic non-residue mod $p$, for efficiency purposes.  
Let us define the following notation: i) $M, S, A$ and $I$ represent field multiplication, squaring, addition and inversion over $\mathbb{F}_p$, respectively, and ii) $m, s, a$ and $i$ represent the same operations over $\mathbb{F}_{p^2}$. On the curve above, the expected costs of scalar multiplication at the 128-bit security level in terms of $\mathbb{F}_{p^2}$ operations on one processor core are given by

\begin{itemize}
\item One core non-GLV: $256\mathrm{DBL} + 42.5\mathrm{mADD} + C_{\mathrm{\emph{Precomp}}} +  C_{\mathrm{\emph{Affine}}} = 1108m + 1152s + 2090a + (1i + 64m + 19s + 56a) + (1i + 3m + 1s) = 2i + 1175m + 1172s + 2146a$,
\item One core 2-GLV: $128\mathrm{DBL} + 43.5\mathrm{mADD} + C_{\mathrm{\emph{Precomp}}} +  C_{\mathrm{\emph{Affine}}} = 732m + 643s + 1201a + (1i + 78m + 19s + 63a) + (1i + 3m + 1s) = 2i + 813m + 663s + 1264a$,
\item One core 4-GLV: $64\mathrm{DBL} + 45.5\mathrm{mADD} + C_{\mathrm{\emph{Precomp}}} +  C_{\mathrm{\emph{Affine}}} = 556m + 393s + 767a + (1i + 106m + 19s + 77a) + (1i + 3m + 1s) = 2i + 665m + 413s + 844a$,
\end{itemize}

\noindent corresponding to the fully sequential executions without using GLV, using 2-GLV and using 4-GLV, respectively. Note that DBL and mADD represent point doubling and mixed addition, and that their costs are $3m+4s+7a$ and $8m+3s+7a$ when using Jacobian coordinates. $C_{\mathrm{\emph{Precomp}}}$ and $C_{\mathrm{\emph{Affine}}}$ represent the cost of precomputation and final conversion to affine coordinates, respectively. The costs above assume the use of interleaving (INT) \cite{GLV01} with width-$w$ non-adjacent form ($w$NAF) using 7 precomputed points ($w=5$) and the use of the LM scheme for precomputing those points \cite{LM08}.
 
Similarly, the expected cost of 4-GLV on four cores is

\begin{itemize}
\item Four core 4-GLV: $64\mathrm{DBL} + 12.5\mathrm{mADD} + C_{\mathrm{\emph{Precomp}}} +  C_{\mathrm{\emph{Affine}}} = 292m + 294s + 536a + (1i + 69m + 19s + 58a) + (1i + 3m + 1s) = 2i + 364m + 314s + 594a$.
\end{itemize}

Thus, it can be seen a steady cost reduction when switching from a non-GLV based implementation to 2- and 4-GLV. A similar improvement is observed when increasing the number of cores in the case of 4-GLV. Optimal performance is ultimately achieved with this method running on four cores. For instance, following implementation results on an AMD Phenom II X4 940, we have that $1i \approx 50m$, $1s \approx 0.70m$ and $1a \approx 0.2m$ when using a 127-bit prime. In this case, non-GLV, 2-GLV and 4-GLV cost $2525m$, $1630m$ and $1223m$ on one core, respectively. On the other hand, four core 4-GLV costs $803m$. Hence, a remarkable 3x speedup is expected when using 4-GLV on four cores in comparison with a traditional execution on one core.   

Note that, in comparison to $\mathbb{F}_p^2$ arithmetic using a 128-bit prime (such as the one used in \cite{HLX11}), $\mathbb{F}_p^2$ multiplication is expected to be faster since internal field additions can be carried out without carry checks and lazy reduction applies efficiently (see also \cite{Lon11b}); however, other operations get slightly more expensive because reduction involves a few extra shift and rotate operations. 

Let us compare performance against a similar curve using the original GLV method over $\mathbb{F}_p$. In this case the expected costs when using one and two cores are ($1I \approx 200M$, $1S \approx 0.85M$ and $1A \approx 0.2M$ on the targeted platform)

\begin{itemize}
\item One core standard 2-GLV: $128\mathrm{DBL} + 37.2\mathrm{ADD} + 5.3\mathrm{mADD} + C_{\mathrm{\emph{Precomp}}} +  C_{\mathrm{\emph{Affine}}} = 836M + 640S + 1194A + (51M + 26S + 56A) + (1I + 3M + 1S) = 1I + 890M + 667S + 1250A = 1907M$,
\item Two core standard 2-GLV: $128\mathrm{DBL} + 19\mathrm{ADD} + 3.5\mathrm{mADD} + C_{\mathrm{\emph{Precomp}}} +  C_{\mathrm{\emph{Affine}}} = 621M + 580S + 1054A + (44M + 26S + 56A) + (1I + 3M + 1S) = 1I+ 668M + 607S + 1110A = 1606M$,
\end{itemize}

\noindent where DBL, mADD and ADD cost $3M+4S+7A$, $8M+3S+7A$ and $11M+3S+7A$ for the case of Jacobian coordinates. We assume the use of $w$NAF with window width $w=5$ and the LM precomputation scheme without inversions \cite[Ch. 3]{Lon11}. Since in our case we have observed that in practice $1M \approx 0.75m$, the scaled costs of one core and two core standard 2-GLV are equivalent to $1430m$ and $1205m$, respectively. This means that on one core the 4-GLV method is expected to compute scalar multiplication in about 0.86 the time of the standard 2-GLV. Similarly, an optimal execution of 4-GLV on four cores is expected to run in about 0.67 the time of the optimal execution of the standard 2-GLV on two cores.      
 
To confirm our findings we implemented the proposed method using the quadratic twist of $E_1$ over $\mathbb{F}_{p_1^2}$ given by $E'_1/\mathbb{F}_{p_1^2}: y^2 = x^3 + u.9$, where $E_1/\mathbb{F}_{p_1}: y^2 = x^3 + 9$, $p_1 = 2^{127} - 58309$, $u$ is a non-square in $\mathbb{F}_{p_1^2}$ and $\#E'_1(\mathbb{F}_{p_1^2})$ is a 254-bit prime. Since $p_1 = 3\pmod 8$, we represent $\mathbb{F}_{p_1^2}$ as $\mathbb{F}_{p_{1}}[i]$, where $i = \sqrt{-1}$. Let $u = 1+i$. The two endomorphisms are given by $\Phi(x,y) = (\xi x, y) = \lambda P$ and $\Psi(x,y) = (u^{(1-p)/3} x, u^{(1-p)/2} y) = \mu P$, where $\xi^3 = 1\mod p_1$. Following Section 5 it can be verified that $\Phi^2 + \Phi + 1 = 0$ and $\Psi^2 + 1 = 0$. Note that a similar curve was also used in \cite{HLX11} but using a different 4-GLV construction. For the case of the standard 2-GLV, we use the curve $E_2/\mathbb{F}_{p_2}: y^2 = x^3 + 2$, where $p_2 = 2^{256}-11733$ and $\#E'_2(\mathbb{F}_{p_2})$ is a 256-bit prime. The endomorphism is given by $\Phi(x,y)$.      

For our experiments we used a 3.0GHz AMD Phenom II X4 940 processor with four cores. The expected timings in terms of clock cycles are displayed in Table 1. As can be seen, closely following our analysis and considering that the use of multiple cores inserts certain penalty, 4-GLV on four cores injects a speed up close to 3x in comparison with a fully sequential version on one core, and supports a computation that runs in 0.67 the time of the standard 2-GLV on two cores.   

\begin{table}[!t]
\begin{center}
\caption{Point multiplication timings (in clock cycles), 64-bit processor}
\begin{tabular}{lcc}
\hline\noalign{\smallskip}
\hspace{20 mm} Method  \hspace{25 mm} &     \# of cores  \hspace{2 mm}  &  AMD Phenom II       \\
\noalign{\smallskip}\hline\noalign{\smallskip}

$E'_1(\mathbb{F}_{p_1^2})$ 127-bit $p$, 4GLV+INT, 7pts.   & \hspace{1 mm}4  & \hspace{1 mm}91,000   \\

$E'_1(\mathbb{F}_{p_1^2})$ 127-bit $p$, 4GLV+INT, 7pts.   & \hspace{1 mm}1  & \hspace{1 mm}124,000   \\

$E'_1(\mathbb{F}_{p_1^2})$ 127-bit $p$, $w$NAF, 7pts.   & \hspace{1 mm}1  & \hspace{1 mm}248,000   \\

$E_2(\mathbb{F}_{p_2})$ 256-bit $p$, 2GLV+INT, 7pts. & \hspace{1 mm}2  & \hspace{1 mm}136,000  \\

\noalign{\smallskip}\hline

\end{tabular}
\end{center}
\end{table}

Although these experimental results correspond to a $j=0$ curve, we can confidently express that the relative improvement from 2-GLV to 4-GLV on any single GLV curve will be by the same order. This follows theoretically from the minimality of Theorem~\ref{T:min} and the form of Theorem~\ref{T:3}, which together imply that $\frac{\text{bitlength of 4-GLV coefficients}}{\text{bitlength of 2-GLV coefficients}} \approx 1/2$ (no $r, s$ involved). 





\section{Conclusion}
We have produced new families of GLV curves, and written all such curves (up to isomorphism) with nontrivial endomorphisms of degree $\leq 3$. We have shown how to generalize the Gallant-Lambert-Vanstone scalar multiplication method by combining it with the Galbraith-Lin-Scott ideas, to perform a proven almost fourfold speedup on GLV curves over $\mathbb{F}_{p^2}$. We have provided a first explicit bound on such a decomposition using the LLL algorithm. We have then refined this bound using a faster new reduction algorithm, which consists in basically two applications of the extended Euclidean algorithm, one in $\Z$ and the other in $\Z[i]$. This allows us to get a relative improvement (on the same curve) from 2-GLV to 4-GLV \emph{independent} of the curve.

\vspace{\baselineskip}

\noindent\textbf{Acknowledgements}:
We would like to thank Mike Scott for advice on a first version of this work. Also, we would like to thank Diego F. Aranha for his advice on multi-core programming and Joppe Bos for his help on looking for efficient chains for implementing modular inversion.

\end{document}